\titlespacing{\section}{0pt}{0pt}{0pt}
\titlespacing{\definition}{0pt}{0pt}{0pt}
\crefname{figure}{Figure}{Figures}
\newtheorem{example}{Example}
\newtheorem{definition}{Definition}
\newtheorem{theorem}{Theorem}
\newcommand{\hide}[1]{}
\newcommand{\ourapproach}{\emph{BugDoc}\xspace}
\newcommand{\shortcut}{\emph{Shortcut}\xspace}
\newcommand{\stacked}{\emph{Stacked Shortcut}\xspace}
\newcommand{\debuggingdecisiontrees}{\emph{Debugging Decision Trees}\xspace}
\renewcommand{\paragraph}[1]{\vspace{0.1cm}\noindent \textbf{#1}}
\newcommand{\paragraphemph}[1]{\vspace{0.1cm}\noindent \emph{#1}}
\newcommand{\succeed}{\texttt{succeed}\xspace}
\newcommand{\fail}{\texttt{fail}\xspace}
\newenvironment{myitemize}%
{\begin{list}{$\bullet$}{%
	\setlength{\labelsep}{2pt}\setlength{\leftmargin}{0pt}%
	\setlength{\labelwidth}{0pt}%
	\setlength{\listparindent}{0pt}}}
{\end{list}}
\begin{document}

\title{BugDoc: Algorithms to Debug Computational Processes}

\author{Raoni Louren\c{c}o}
\affiliation{%
  \institution{New York University}
  }
\email{raoni@nyu.edu}

\author{Juliana Freire}
\affiliation{%
  \institution{New York University}
}
\email{juliana.freire@nyu.edu}

\author{Dennis Shasha}
\affiliation{%
  \institution{New York University}
  }
\email{shasha@courant.nyu.edu}

\begin{abstract}
Data analysis for scientific experiments and enterprises, large-scale  simulations, and machine learning tasks all entail the use of complex computational pipelines to reach quantitative and qualitative conclusions. If some of the activities in a pipeline produce erroneous outputs, the pipeline may fail to execute or produce incorrect results. Inferring the root cause(s) of such failures is challenging, usually requiring time and much human thought, while still being error-prone. We propose a new approach that makes use of iteration and provenance to automatically infer the root causes and derive succinct explanations of failures. Through a detailed experimental evaluation, we assess the cost, precision, and recall of our approach compared to the state of the art. Our experimental data and processing software is available for use, reproducibility, and enhancement.
\end{abstract}

\begin{CCSXML}
<ccs2012>
<concept>
<concept_id>10002951.10002952.10002953.10010820.10003623</concept_id>
<concept_desc>Information systems~Data provenance</concept_desc>
<concept_significance>300</concept_significance>
</concept>
</ccs2012>
\end{CCSXML}

\ccsdesc[300]{Information systems~Data provenance}

\keywords{}

\copyrightyear{2020}
\acmYear{2020}
\setcopyright{acmlicensed}\acmConference[SIGMOD'20]{Proceedings of the 2020 ACM SIGMOD International Conference on Management of Data}{June 14--19, 2020}{Portland, OR, USA}
\acmBooktitle{Proceedings of the 2020 ACM SIGMOD International Conference on Management of Data (SIGMOD'20), June 14--19, 2020, Portland, OR, USA}
\acmPrice{15.00}
\acmDOI{10.1145/3318464.3389763}
\acmISBN{978-1-4503-6735-6/20/06}

\maketitle

\section{Introduction} \label{sec:intro}
Computational pipelines are widely used in many domains, from astrophysics and biology to enterprise analytics. They are characterized by interdependent modules,  associated parameters, and data inputs. Results derived from these pipelines lead to conclusions and, potentially, actions.
If one or more modules in a pipeline produce erroneous or unexpected outputs, these conclusions may be incorrect. 
Thus, it is critical to identify the causes of such failures.

Discovering the root cause of failures in a pipeline is challenging because problems can come from many different sources, including bugs in the code, input data, software updates, and improper parameter settings.  Connecting the erroneous result to its root cause is especially difficult for long pipelines or when multiple pipelines are composed.
Consider the following real but sanitized examples.

 \paragraphemph{Example: Enterprise Analytics.} In an application deployed by a major software company, plots for sales forecasts showed a sharp decrease compared to historical values. After much investigation, the problem was tracked down to a data feed (coming from an external data provider), whose temporal resolution had changed from monthly to weekly. The change in resolution affected the predictions of a machine learning pipeline, leading to incorrect forecasts.

\paragraphemph{Example: Exploring Supernovas.} In an astronomy experiment, some visualizations of supernovas  presented unusual artifacts that could have indicated a discovery. The experimental analysis consisted of multiple pipelines run at different sites, including data collection at the telescope site, data processing at a high-performance computing facility, and data analysis run on the physicist's desktop. After spending substantial time trying to verify the results, the physicists found that a bug introduced in the new version of the data processing software had caused the artifacts.

To debug such problems, users currently expend considerable effort reasoning about the  effects of the many possible different settings. This requires them to tune and execute new pipeline instances to test hypotheses manually, which is tedious, time-consuming, and error-prone.

We propose new methods and a system that automatically and iteratively identifies one or more minimal causes of failures in general computational pipelines (or workflows). %

\paragraph{The Need for Systematic Iteration.} %
Consider the example in Figure~\ref{fig:pipeline}, which shows a generic template for a machine learning pipeline and a log of different instances that were run with their associated results.

The pipeline reads a dataset, splits it into training and test subsets, creates and executes an estimator, and computes the F-measure score using 10-fold cross-validation.  
A data scientist uses this template to  understand how different estimators perform for different types of input data, and ultimately, to derive a pipeline instance that leads to high scores. 

\begin{figure}[t]
    \begin{center}
	    \includegraphics[width=0.98\columnwidth]{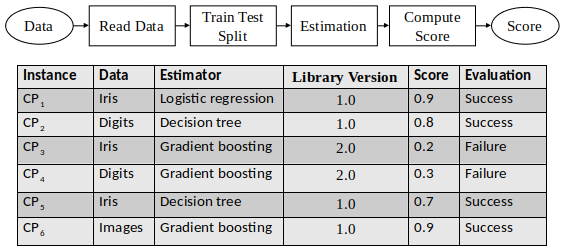}
	\end{center}
\vspace{-.3cm}
	\caption{Machine learning pipeline and its provenance. A data scientist can explore different input datasets and classifier estimators to identify a suitable solution for a classification problem.}
	\label{fig:pipeline}
\vspace{-.6cm}
\end{figure}

Analyzing the provenance of the runs, we can see that \emph{gradient boosting} leads to low scores for two of the datasets (\emph{Iris} and \emph{Digits}), but it has a high score for \emph{Images}. By contrast, \emph{decision trees} work well for both the \emph{Iris} and \emph{Digits} datasets, and \emph{logistic regression} leads to a high score for \emph{Iris}.

This may suggest that there is a problem with the \emph{gradient boosting} module for some parameters, that \emph{decision trees} provide a suitable compromise for different data, and that \emph{logistic regression} is good for the \emph{Iris} data. Because each run used different parameters for each method depending on the dataset, a definitive conclusion has to await additional testing of these hyperparameters.  Doing so manually is time-consuming and error-prone, while \ourapproach  automates this process.

\paragraph{Identifying Root Causes of Failures: Challenges.}
As the above examples illustrate, there are many potential causes for a given problem. %
Prior work used provenance to explain errors in computational processes that derive data~\cite{Wang:2015:DXD:2723372.2750549,GebalyFGKS14}. However, to test these hypotheses and obtain complete (and accurate) explanations, new pipeline instances must be executed that vary the different components of the pipeline.  

Trying all possible combinations of parameter-values leads to a combinatorial explosion of instances to execute,  and therefore can be prohibitively expensive. %
Thus, a critical challenge lies in the design of a  strategy that is provably efficient (often requiring only a linear number of pipeline executions in the number of parameters) for finding root causes. %
Causes of errors can include multiple parameters, each of which may have large domains.
So, it is important to have clear and concise explanations in terms of the parameter values already tried.

\paragraph{Contributions.} In this paper, we introduce \ourapproach, a new approach that makes  use of iteration and provenance to infer the root causes automatically and derive succinct explanations of failures in pipelines. Our contributions can be summarized as follows:
\begin{enumerate}
    \item 
    \ourapproach finds root causes autonomously and iteratively, intelligently selecting so-far untested combinations. %

    \item 
    We propose debugging algorithms that find root causes using fewer pipeline instances than state-of-the-art methods, avoiding unnecessary costly computations. In fact, \ourapproach often finds root causes using only a  number of pipeline instances linear in the number of parameters. 
    
  \item     The \ourapproach system further reduces time by exploiting parallelism, and
     \item Finally, \ourapproach derives concise explanations, to facilitate the tasks of human debuggers. %
\end{enumerate}

\paragraph{Outline}. The remainder of this paper is organized as follows. We review related work in Section~\ref{sec:relatedWork}. Section~\ref{sec:motivation} introduces the model we use for computational pipelines and formally defines the problem we address. In Section~\ref{sec:strategy}, we present algorithms to search for simple and complex causes of failures. We compare \ourapproach with the state of the art in Section~\ref{sec:experiments}
and conclude in Section~\ref{sec:conclusion}, where we outline directions for future work.

\section{Related Work} \label{sec:relatedWork}
\paragraph{Debugging Data and Pipelines.}
Recently, the problem of explaining query results and interesting features in data
has received substantial attention in the literature~\cite{Wang:2015:DXD:2723372.2750549, Bailis:2017:MPA:3035918.3035928, Chirigati:2016:DPM:2882903.2915245,DBLP:journals/pvldb/MeliouRS14, GebalyFGKS14}. 
Some have focused on explaining where and how errors occur in the data generation process~\cite{Wang:2015:DXD:2723372.2750549} and which data items are most likely to be causes of relational query outputs~\cite{DBLP:journals/pvldb/MeliouRS14,Wang:2017:QDE:3035918.3035925}.  Others have attempted to use data to explain \emph{salient} features in data (e.g., outliers) by discovering relationships among attribute values~\cite{Bailis:2017:MPA:3035918.3035928, Chirigati:2016:DPM:2882903.2915245, GebalyFGKS14}. 
In contrast,
\ourapproach aims to diagnose abnormal behavior in
computational pipelines that may be due to errors in  data, programs, or sequencing of operations. %

Previous work on pipeline debugging has focused on analyzing execution histories to identify problematic parameter settings or inputs, but such work does not iteratively infer and test new workflow instances.  %
Bala and Chana~\cite{Bala:2015:IFP:2775763.2776365} applied several machine learning algorithms to predict whether a particular pipeline instance will fail to execute in a cloud environment. The goal is to reduce the consumption of expensive resources by recommending against executing the instance if it has a high probability of failure. The system does not attempt to find the root causes of such failures.
Chen et al.~\cite{Chen2017} developed a system that identifies problems by finding the differences between  provenance (encoded as trees) of good and bad runs.
However, in general, these differences do not necessarily identify root causes, though they often contain them. 

Some systems have been developed to debug specific applications.
Viska~\cite{Gudmundsdottir2017} helps 
users identify the underlying causes for performance differences for a set of configurations. Users infer hypotheses by exploring performance data and then test these hypotheses by asking questions about the causal relationships between a set of selected features and the resulting performance.  Thus, Viska can be used to validate hypotheses but not identify root causes. %
Molly~\cite{DBLP:conf/sigmod/AlvaroRH15} combines the analysis of lineage with SAT solvers to find bugs in fault-tolerance protocols for distributed systems. It simulates failures, such as permanent crash failures, message loss, and temporary network partitions, in order to test fault-tolerance protocols over a specified period.

Although not designed for computational pipelines, Data X-Ray~\cite{Wang:2015:DXD:2723372.2750549} provides a mechanism for explaining the systematic causes of errors in the data generation process. The system finds shared features among corrupt data elements and produces a diagnosis of the problems.
Given the provenance of pipeline instances together with error annotations, Data X-Ray derives explanations consisting of features that describe the parameter-value pairs responsible for the errors. 
Explanation Tables~\cite{GebalyFGKS14} 
provides explanations for binary outcomes. Like Data X-Ray, it forms hypotheses based on a log of executions, but it does not propose new ones. %
Based on a table with a set of categorical columns (attributes) and one binary column (outcome), the algorithm produces interpretable explanations of the causes for the outcome in terms of the attribute-value pairs combinations. The explanations consist of a disjunction of patterns, and each pattern is a conjunction of attribute-value pairs. %
As discussed in Section~\ref{sec:experiments}, \ourapproach produces explanations that are similar to those of Data X-Ray and Explanation Tables, but they are also minimal and able to express inequalities and negations. Furthermore, \ourapproach employs a systematic method to intelligently generate new instances that enable it to derive concise explanations that are root causes for a problem.

\paragraph{Hyperparameter Tuning}
Our work is related algorithmically to approaches from hyperparameter tuning ~\cite{Bergstra2011, Bergstra:2013:MSM:3042817.3042832,Snoek:2012:PBO:2999325.2999464,Snoek:2015:SBO:3045118.3045349,Dolatnia2016},
since we can view the generation of new pipeline instances for debugging as an exploration of the space of its hyperparameters. %
Bayesian optimization methods are considered state of the art for the hyperparameter optimization problem~\cite{Bergstra:2012:RSH:2188385.2188395,Bergstra:2013:MSM:3042817.3042832,Snoek:2012:PBO:2999325.2999464,Snoek:2015:SBO:3045118.3045349,Dolatnia2016}. 
These methods approximate a probability model of the performance outcome given a parameter configuration that is updated from a history of executions. Gaussian Processes and Tree-structured Parzen Estimator are examples of probability models~\cite{Bergstra2011} used to optimize an unknown loss function using the \emph{expected improvement} criterion as acquisition function. To do this, they assume the search space is smooth and differentiable. This assumption, however, does not hold in general for arbitrary computational pipelines. Moreover, our goal is not to identify bad configurations (we usually have those, to begin with), but to identify the root cause(s), which are due to a subset of the parameters. Optimization, by contrast, seeks entire (in their case, good) configurations.

Examples of hyperparameter tuning techniques include OtterTune and BOAT. OtterTune~\cite{VanAken:2017:ADM:3035918.3064029} is a system that uses supervised learning techniques to find optimal settings of database system administrator knobs given a database workload and a set of metrics (optimization functions). 
BOAT~\cite{Dalibard:2017:BBA:3038912.3052662}
also optimizes database system configurations using Bayesian Optimization. However, instead of starting the optimization with a standard Gaussian process, it 
allows a user to input an initial probabilistic model that exploits previous knowledge of the problem.

\paragraph{Software Testing.}
State-of-the-art techniques for software testing~\cite{Johnson2018,Galhotra2017}, statistical debugging~\cite{Zheng2006,Liblit2005},  and bug localization~\cite{Gulzar2018,Attariyan2011,Attariyan2012} are often application-specific and/or require a user-defined test suite.
Some approaches require the instrumentation of binaries or source code in the form of predicates that can be observed during computational runs~\cite{Zheng2006,Liblit2005}. Such information, if available, can be helpful to localize and explain bugs. \ourapproach, however, does not assume any knowledge of the internal code of the computational processes:  it was designed to debug black-box pipelines where we can observe only the inputs and outputs. Hence, our explanations are expressed in terms of input parameters.
However, an interesting direction for future work would  be to consider
variables (or predicates) that can be observed but not manipulated in
our formalism to generate potentially richer explanations.
Approaches have also been proposed for bug localization in a  black-box scenario; however these were designed for specific applications and environments, e.g., Pinpoint for J2EE~\cite{chen2002pinpoint}. By contrast, \ourapproach was designed to support language-independent workflows.%

Automated test generation techniques also derive new tests (or instances in our terminology).  However, they do not aim to identify root causes (see, e.g.,~\cite{fraser@tse2013,godefroid@ndss2008,holler@uss2012}). One exception is 
Causal Testing~\cite{Johnson2018}. 
Similar to \ourapproach, Causal Testing aims to help users identify root causes for problems. However,  it requires the user to specify a (single) suspect variable to be investigated in a white-box scenario, while \ourapproach searches for potential causes for failures in a black-box scenario. Further  these causes may include multiple variables and value assignments.

\ourapproach helps a user to trace back the potential cause of a given behavior
to a component of a pipeline. Nevertheless, since a pipeline can
orchestrate a multitude of sophisticated tools, to identify and correct the
bug, it may be necessary to drill down into an individual component. If source code is available for that,  traditional debugging techniques can be used.

\paragraph{Identifying Denial Constraints.} 
Our approach is also related to the discovery of denial constraints in relational tables~\cite{BleifuB2017,Chu:2013:DDC:2536258.2536262}, particularly  functional dependencies. The  similarity can be illustrated as follows: imagine that there is a column indicating ``successful instance" or ``failed instance" for some set of parameter-values. Call it \emph{Success Or Fail}. If a failure occurs exactly when  parameter \texttt{A = 5} and \texttt{B = 6}, then that would manifest as a functional dependency  \texttt{AB $\xrightarrow{}$} \emph{Success Or Fail}, i.e., the result is a function of parameters \texttt{A} and \texttt{B}. However, if the failure happens when a disjunction holds, e.g., \texttt{A = 5} or \texttt{B = 6}, the same functional dependency would be inferred.  No more minimal functional dependencies such as \texttt{A $\xrightarrow{}$} \emph{Success Or Fail} would be inferred, because, for example, when \texttt{A = 4}, there can be success or failure depending on the \texttt{B} value.  Thus, functional dependencies are not expressive enough to characterize root causes.

\section{Definitions and Problem Statement}\label{sec:motivation}
Intuitively, given a set of computational pipeline instances, some of which lead to bad or questionable results, our goal is to find the root causes of failures, possibly by creating and executing new pipeline instances.

\begin{definition}{\textsc{(Pipeline, instance, parameter-value pairs,  value universe, results)}}
A \textbf{computational pipeline} (or workflow) $CP$ is a collection of programs connected together that contains a set of manipulable parameters $P$ (i.e., including hyperparameters, input data, versions of programs, computational modules). We denote as $CP_i$ a \textbf{pipeline instance} of $CP$ that defines values for the parameters for a particular run of $CP$. Thus, an instance $CP_i$ is associated with a list of \textbf{parameter-value pairs} $Pv_i$ containing an assignment $(p,v)$ for each $p \in P$. We denote by $CP_i[p]=v$ the assignment of value $v$ for parameter $p$ in the instance $CP_i$. For each parameter $p \in P$, the \textbf{parameter-value universe} $U_p$ is the set  of all property-values assigned to $p$ by any pipeline instance thus far, i.e., $U_p = \{ v | \exists i (p,v) \in CP_i \}$. The \textbf{Universe} $U = \{ (p, U_p ) | p \in P \}$.
\end{definition}

As we discuss in Section~\ref{sec:strategy}, the initial parameter-value universe $U$ can be expanded  by explicitly defining the parameter domains (e.g., parameter satisfaction can take integer values between 1 and 10).

\begin{definition}{\textsc{(Evaluation)}}
Let $E$ be a procedure that \textbf{evaluates} the result of an instance such that $E(CP_i)=\succeed$ if the results are acceptable, and $E(CP_i)=\fail$ otherwise. Normally, the evaluation procedure will be code that looks at some property of the result of a given pipeline instance.
\end{definition}

Thus a bug, for the purposes of this paper, is a collection of pipelines that, when executed, evaluate to $\fail$. Note that this is a deterministic definition that doesn't capture intermittent failures, e.g., timing bugs or non-deterministic failures. Even in such cases, however, if the bugs occur often enough, then \ourapproach may help, though without guarantee.

\begin{definition}{\textsc{(Hypothetical root cause of failure)}}
  Given a set of instances $G = $ $ CP_1, ... , CP_k$ and associated evaluations $E(CP_1 ), .... , E(CP_k )$, a \textbf{hypothetical root cause of failure} is a set %
  $C_f$ consisting of a Boolean conjunction of parameter-comparator-value triples (e.g., a triple may be of the form $A > 5$) which obey the following conditions among the instances $G$: (i) there is at least one $CP_i$ such that $Pv_i$ satisfies $C_f$ and $E(CP_i ) = \fail$; and (ii) if $E(CP_i ) = \succeed$, then the parameter-values pairs $Pv_i$ of $CP_i$ do not satisfy the conjunction $C_f$. 
  
\end{definition}

\paragraphemph{Example.} To illustrate the converse of point (ii), if $C_f$ = $A > 5$ and $B =7$, and $CP_i$ has the parameter values $A=15$ and $B=7$ and succeeds, then $C_f$ does not obey condition (ii) of a hypothetical root cause of failure.

$C_f$ is called \emph{hypothetical} because, based on the evidence so far, $C_f$ leads to $\fail$, but further evidence may refute that hypothesis. %

We should note that the root causes defined here should not be interpreted as the \emph{actual causes} of pipeline problems as characterized by causality theory~\cite{Pearl2009}. The goal of \ourapproach is to help the user identify sets of parameter-value pairs for which a black-box pipeline will always \fail. However, the root causes we output are not counterfactuals~\cite{lewis2013counterfactuals}, i.e., the pipeline would not necessarily \succeed had the root cause not been observed, because perhaps another root cause may come into play.  %
We simply want to determine the following implication definitively: $\textit{root-cause} \implies \fail$ for a single root cause.
\ourapproach can, however, also discover disjunctive combinations of configurations that lead to failure.

\begin{definition}{\textsc{(Definitive root cause of failure)}}
A \emph{hypothetical root cause of failure} $D$ is a 
\textbf{definitive root cause of failure} if there is no  instance $CP_q$ from the universe of $U$  with the property that $E(CP_q ) = \succeed$ and $Pv_q$ satisfies $D$. Informally, no pipeline instance that includes $D$ as a subset of its parameter-value settings leads to $\succeed$.
\end{definition}

\begin{definition}{\textsc{(Minimal Definitive Root Cause)}}
A definitive root cause $D$ is minimal if no proper subset of $D$ is a definitive root cause.
\end{definition}

The example in Figure~\ref{fig:pipeline} illustrates these concepts using the simple machine learning pipeline from the introduction. %
A possible evaluation procedure would test whether the resulting score is greater than 0.6. In this case, 
\texttt{Data} being different from \emph{Images} and \texttt{Estimator}  equal to \emph{gradient boosting} is a hypothetical root cause of failure. Section~\ref{sec:strategy} presents algorithms that  determine whether this root cause is definitive and minimal.

\paragraph{Problem Definition.} Given a computational pipeline $CP$ (e.g., a query, script, simulation) and a set of parameter-value pairs associated with previously-run instances $G = CP_1, ... , CP_k$,  we consider two goals: 
(i) to find at least one minimal definitive root cause or (ii) to find all minimal definitive root causes. Our cost measure for both goals is the number of executed pipeline instances beyond any given, previously run, instances.

\section{Debugging Algorithms}\label{sec:strategy}
Given a set of pipeline instances, \ourapproach identifies minimal definitive root
causes for failures.  As noted above, a naive strategy would be to try every
possible parameter-value pair combination of the parameter-value
universe, requiring the testing of a number of pipeline instances that is exponential in the number of
parameters. Instead, \ourapproach uses heuristics that turn out to be
quite effective at finding promising configurations. %

\ourapproach uses two iterative debugging algorithms in turn. The
first, called \shortcut, discovers definitive root causes (which we sometimes abbreviate to,
simply, bugs) consisting of a single conjunction of parameter-value (formally,  
parameter-equality-value) pairs.
The second, called \debuggingdecisiontrees and introduced in ~\cite{Lourenco2019}, discovers more complex definitive root causes
involving  inequalities (e.g., $A$ takes a value between $5$ and $13$). %

Because the results of the \debuggingdecisiontrees algorithm consist of disjunctions of conjunctions, they may contain redundancies, which we simplify using the Quine-McCluskey algorithm~\cite{DBLP:journals/corr/Huang14c}. The goal is to create concise explanations, making it easy for users to understand and act on them.

\subsection{Looking for Single Root Causes: The Shortcut Algorithm}\label{sec:minimal}

The \shortcut algorithm, shown in Algorithm~\ref{algo:shortcut},  starts from a pipeline instance $CP_f$ that
evaluates to \fail. It then uses pipeline instances that succeeded and are \emph{disjoint} , i.e., they share no parameter-values, from
$CP_f$  to construct new tests.

\begin{definition}[Disjoint Instances]
Two pipeline instances $CP_x$ and $CP_y$ are disjoint if $CP_x[p]\neq CP_y[p]$,$\forall p \in P$ associated to $CP$.   
\end{definition}

Intuitively, the \shortcut algorithm starts with the failing pipeline
instance $CP_f$ and a disjoint successful instance $CP_g$.
The existence of such a disjoint succeeding pipeline instance is a requirement for the theoretical results that follow and is called the {\em Disjointness Condition}. If the Disjointness Condition does not hold, then this method may still be useful as a heuristic.
 
The \emph{current} instance $CP_{current}$ is initialized to $CP_f$.
Then, using some order among parameters, for each parameter $p$, an
instance
$$CP_{current'}$$ is executed that consists of a copy of $CP_{current}$ except that $CP_{current'}[p] = CP_g[p]$.  If the instance $CP_{current'}$ fails then $CP_{current}$ is changed to $CP_{current'}$ and the next parameter is considered. The intution is that the value of $p$ in $CP_f$ did not cause the failure. In the end, the definitive minimal root
cause asserted by the \shortcut will be a subset of the pipeline
instance $CP_f$ that is still present in the final instance of
$CP_{current}$. We denote that subset as $D$.

The algorithm then performs a sanity check to see whether any superset
of the hypothetical minimal root cause $D$ is in an already executed
successful execution. If so, then the \shortcut algorithm has found a
proper subset of the definitive minimal root cause, but not an actual
definitive minimal root cause.
 
As noted above, if the Disjointness Condition does not hold, then the
\shortcut algorithm can still be used as a heuristic: take an instance
that differs in as many parameter-values as possible. While the
theoretical results that follow will not hold, this will often be good
enough, as the experimental results show (Section~\ref{sec:experiments}). %

Here is an example that illustrates how the \shortcut algorithm
works. %

\begin{example}\label{exemp:pipeline}
Consider the machine learning pipeline in Figure~\ref{fig:pipeline} again.
Here, the user is interested in investigating pipelines that lead to low F-measure scores and defines an evaluation function that returns \succeed if $\textit{score} \ge 0.6$ and \fail otherwise.  

For this pipeline, the user investigates three parameters:
\texttt{Dataset}, the input data to be classified; \texttt{Estimator},
the classification algorithm to be executed; and \texttt{Library
  Version} indicates the version of the machine learning library used. 
Table~\ref{tab:traces} shows examples of three executions of the pipeline.
\end{example}

\begin{table}[ht]
\centering
\caption{An initial (given) set of classification pipelines instances}
\label{tab:traces}
\resizebox{\columnwidth}{!}{
\begin{tabular}{|c|c|c|c|c|}
\hline
\textbf{Dataset} & \textbf{Estimator} & \textbf{Library Version} & \textbf{Score} & \textbf{Evaluation} ($\textit{score} \ge 0.6$)  \\ \hline
Iris&  Logistic Regression&  1.0&  0.9& \succeed            \\
\hline
Digits& Decision Tree&  1.0&  0.8& \succeed            \\
\hline
Iris& Gradient Boosting&  2.0&  0.2& \fail            \\
\hline
\end{tabular}
}
\end{table}

\begin{table}
\centering
\caption{Set of classification pipelines instances including the new
  instances (shown in blue) created by \shortcut by substituting values of parameters in $CP_f$ by corresponding values in $CP_g$.}
\label{tab:new_traces}
\resizebox{\columnwidth}{!}{
\begin{tabular}{|c|c|c|c|c|}
\hline
\textbf{Dataset} & \textbf{Estimator} & \textbf{Library Version} & \textbf{Score} & \textbf{Evaluation} ($\textit{score} \ge 0.6$)  \\ \hline
Iris&  Logistic Regression&  1.0&  0.9& \succeed            \\
\hline
Digits& Decision Tree&  1.0&  0.8& \succeed            \\
\hline
Iris& Gradient Boosting&  2.0&  0.2& \fail            \\
\hline
\textcolor{blue}{Digits}& \textcolor{blue}{Gradient Boosting}&  \textcolor{blue}{2.0}&  \textcolor{blue}{0.2}& \textcolor{blue}{\fail} \\
\hline
\textcolor{blue}{Digits}& \textcolor{blue}{Decision Tree}&  \textcolor{blue}{2.0}&  \textcolor{blue}{0.3}& \textcolor{blue}{\fail}           \\
\hline
\textcolor{blue}{\emph{Digits}}& \textcolor{blue}{\emph{Decision Tree}}&  \textcolor{blue}{\emph{1.0}}&  \textcolor{blue}{\emph{0.8}}& \textcolor{blue}{\emph{\succeed}}\\
\hline
\end{tabular}
}
\end{table}

In the initial traces shown in Table~\ref{tab:traces}, there are only
two disjoint instances with different evaluations:

\begin{description}
\item[] $CP_g$ = \{(Dataset,Digits), \\
        (Estimator,Decision Tree),\\
        (LibraryVersion,1.0)\} 
\item[] $CP_f$ =\{(Dataset,Iris), \\
(Estimator,Gradient Boosting), \\
(LibraryVersion,2.0) \}
\end{description}

Examining parameter \texttt{Dataset}, we replace its corresponding value in the current instance to be executed from Iris to Digits. Because the execution evaluates to \fail,  we keep this replacement in the current instance. Similarly, when we update the value of parameter \texttt{Estimator} to Decision Tree, the instance evaluation is still \fail, so we keep that replacement as well.

However, when \texttt{Library Version} is changed to $1.0$, the
resulting configuration evaluates to \succeed. This suggests that
\texttt{Library Version} $2.0$ may be the source of the problem.
Table~\ref{tab:new_traces}, displays all pipeline instances evaluated, including the new instances generated by the \shortcut algorithm.

For Pipelines with root causes similar to the ones in Example~\ref{exemp:pipeline}, the algorithm will find a minimal definitive root cause.

\RestyleAlgo{boxruled}
\begin{algorithm}
\SetAlgoLined
\KwIn{$CPI$, the set of pipeline instances in the execution history characterized by their parameter-values}
\KwIn{$E$, the evaluation function}
\KwIn{$P$, list of parameters}
\KwIn{$CP_f$, pipeline instance evaluated as \fail}
\KwIn{$CP_g$, pipeline instance evaluated as \succeed disjoint to $CP_f$}
\KwOut{$D$, asserted minimal definitive root cause}

\tcc{Initialization}	
$CP_{\text{current}} \gets CP_f$\;

\For{$p \in P$}{
    $CP_{\text{current}'} \gets CP_{\text{current}}$\;
    $CP_{\text{current}'}[p] \gets CP_g[p]$\;
    \If{$E(CP_{\text{current}'}) = \fail$}{
       $CP_{\text{current}} \gets CP_{\text{current}'}$\;
    }
}
$D  \gets CP_{\text{current}} \cap CP_f$\;

\For{$CP_i \in CPI$}{
    \If{$D \subseteq CP_i$ and $E(CP_i) = \succeed$ }{
       \KwRet{$\emptyset$}
    }
}

\KwRet{$D$}

 \caption{\shortcut Algorithm}\label{algo:shortcut}
\end{algorithm}

\begin{theorem}
If all definitive root causes are singleton para\-meter-values and the disjointness condition holds, then the shortcut algorithm will always assert exactly a minimal definitive root cause.
\label{theorem:singleton}
\end{theorem}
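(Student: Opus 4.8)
The plan is to follow the algorithm's state variable $CP_{\text{current}}$ through the main loop and show that the asserted set $D = CP_{\text{current}} \cap CP_f$ is precisely a singleton that is a minimal definitive root cause, and that the final sanity‑check loop never overrides it with $\emptyset$. First I would record two simple facts about the loop. (a) \emph{Invariant:} $E(CP_{\text{current}}) = \fail$ at all times, since $CP_{\text{current}}$ starts at $CP_f$ and is reassigned only to an instance that has just evaluated to \fail. (b) \emph{Bookkeeping:} before a parameter $p$ is processed $CP_{\text{current}}[p] = CP_f[p]$, and afterwards $CP_{\text{current}}[p] \in \{CP_f[p], CP_g[p]\}$ and is never touched again; moreover $(p, CP_f[p]) \in D$ exactly when, at the time $p$ was processed, the swap $CP_{\text{current}'}[p] \gets CP_g[p]$ yielded a \succeed instance. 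Combining (a) with the Disjointness Condition ($CP_g[p] \neq CP_f[p]$ for every $p$) gives $D \neq \emptyset$: if $D$ were empty then $CP_{\text{current}}$ would disagree with $CP_f$ on every parameter, hence equal $CP_g$, which succeeds — contradicting (a).

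Next I would invoke the structural hypothesis. I would first observe that the full parameter‑value set $Pv_i$ of any \fail instance $CP_i$ is itself a definitive root cause: by determinism of $E$, a succeeding instance satisfying $Pv_i$ would have to agree with $CP_i$ on every parameter and so would fail. Hence, by finiteness of $P$, every \fail instance contains a \emph{minimal} definitive root cause, which by hypothesis is a singleton. Applying this to the final $CP_{\text{current}}$: it contains a singleton definitive root cause $(p^\dagger, w)$ with $CP_{\text{current}}[p^\dagger] = w$. I then claim $p^\dagger \in D$: otherwise $CP_{\text{current}}[p^\dagger] = CP_g[p^\dagger] = w$, so $CP_g$ satisfies the definitive root cause $(p^\dagger, w)$ and must fail — contradiction. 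Therefore $w = CP_f[p^\dagger]$ and, by the bookkeeping, $CP_{\text{current}}[p^\dagger] = w$ throughout the run.

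Now I would show $D = \{(p^\dagger, w)\}$. Suppose some other pair $(q, CP_f[q]) \in D$ with $q \neq p^\dagger$. At the moment $q$ is processed, the candidate instance $CP_{\text{current}'}$ still has coordinate $p^\dagger$ equal to $w$ (either $p^\dagger$ has not yet been processed, or it has and lies in $D$), so $CP_{\text{current}'}$ contains the definitive root cause $(p^\dagger, w)$ and evaluates to \fail; but by the Disjointness Condition that swap genuinely changed $q$'s value, so $(q, CP_f[q]) \in D$ would require the swap to have \succeed'ed — contradiction. Hence $D$ is the singleton $\{(p^\dagger, w)\}$, and it is a definitive root cause. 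Minimality is then immediate: the only proper subset of $D$ is $\emptyset$, which is not even a hypothetical root cause, because the succeeding instance $CP_g$ satisfies the empty conjunction and violates condition (ii). Finally, since $D$ is definitive, no instance of the universe — in particular none in the history $CPI$ — both satisfies $D$ and evaluates to \succeed, so the sanity‑check loop never returns $\emptyset$; the algorithm therefore asserts exactly $D$, a minimal definitive root cause.

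The delicate point — and the step I expect a careful reader to scrutinize — is the reading of the hypothesis ``all definitive root causes are singleton parameter‑values.'' Taken literally it is unsatisfiable once $|P| \ge 2$, since $Pv_f$ is always a definitive root cause; the statement that is actually needed, and that I would adopt and prove, is ``every \emph{minimal} definitive root cause is a singleton,'' together with the lemma (used twice above) that every \fail instance contains a minimal definitive root cause. Getting these two ingredients stated and justified cleanly is the main obstacle; the rest of the argument is the straightforward loop‑invariant bookkeeping sketched above.
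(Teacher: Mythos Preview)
Your proof is correct and follows essentially the same approach as the paper's: both argue that $CP_g$ carries no root-cause element while $CP_f$ does, and then track the loop to conclude that exactly a singleton root-cause parameter survives in $CP_{\text{current}} \cap CP_f$. Your version is considerably more rigorous --- you supply the loop invariant, handle the sanity-check loop explicitly, and (usefully) flag that the hypothesis must be read as ``every \emph{minimal} definitive root cause is a singleton,'' a point the paper's brief proof leaves implicit.
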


\vspace{-.2cm}

\begin{proof}
By construction. If all definitive root causes are singletons, then $CP_g$ cannot contain
any element of a root cause, otherwise $E(CP_g)=\fail$.  By contrast, $CP_f$ must contain at least one root cause. When iterating over parameter $p$, the \shortcut algorithm will replace $CP_f[p]$ by $CP_g[p]$ (because the values must be different on all parameters $p$ by the Disjointness Condition)  while there is still one root cause in $CP_{\text{current}}$. Therefore, by the end of the algorithm, only the the root cause would remain.
\vspace{-.2cm}        
\end{proof}

\paragraph{Guarantees of the Shortcut Algorithm.}
The \shortcut algorithm may be too aggressive in the sense that it can
return a root cause $D$ that is a proper subset of an actual minimal
definitive root cause of failure.

\vspace{-.1cm}

\begin{example}\label{exp:truncated}
Suppose that we have two minimal definitive root causes:
\begin{enumerate}
    \item $D_1 = \{(p_1,v_1),(p_2,v_2)\}$
    \item $D_2 = \{(p_1,v_1'),(p_3,v_3)\}$
\end{enumerate}
Consider also a computational pipeline consisting of three parameters $P=\{p_1,p_2,p_3\}$, and $CP_f$ and $CP_g$ as follows:
\begin{itemize}
    \item $CP_f = \{(p_1,v_1),(p_2,v_2),(p_3,v_3)\}$
    \item $CP_g = \{(p_1,v_1'),(p_2,v_2'),(p_3,v_3')\}$
\end{itemize}
Clearly $D_1 \subseteq CP_f$, therefore it is the root cause of the failure of $CP_f$. However, when iterating over parameter $p_1$, the \shortcut algorithm updates $CP_{\text{current}}[p_1]=v_1'$. But $E(CP_{\text{current}'})=\fail$ because $D_2 \subseteq CP_{\text{current}'}$. The same is observed when the algorithm iterates over parameter $p_2$. 
Consequently, the algorithm outputs $D=\{(p_3,v_3)\}$ as the root cause, but that is a proper subset of the minimal definitive root cause $D_2$.

\end{example}

In this case, we say that $D$ is a \textbf{truncated assertion}, i.e., it is too short. Note, however, $D$ will never be too long.

\begin{theorem}
The \shortcut algorithm never asserts a superset of a minimal
definitive root cause, provided the Disjointness Condition holds.
\label{theorem:minimality}
\end{theorem}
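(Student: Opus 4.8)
The plan is to argue by contradiction, leaning on a careful description of how $CP_{\text{current}}$ evolves through the loop. First I would record the loop invariant $E(CP_{\text{current}}) = \fail$: it holds initially since $CP_{\text{current}} = CP_f$, and $CP_{\text{current}}$ is overwritten by $CP_{\text{current}'}$ only when $E(CP_{\text{current}'}) = \fail$. Next, using the Disjointness Condition ($CP_f[p]\neq CP_g[p]$ for every $p$), I would establish a characterization of the output: since each parameter is processed exactly once and $CP_{\text{current}}[p]$ still equals $CP_f[p]$ when $p$ is reached, a pair $(p, CP_f[p])$ lies in $D = CP_{\text{current}}^{\text{final}} \cap CP_f$ if and only if the ``swap test'' for $p$ succeeded (so no replacement was made), and in that case $p$ is never modified again; conversely, $(p, CP_f[p])\notin D$ forces $CP_{\text{current}}^{\text{final}}[p] = CP_g[p]$. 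In particular every pair of $D$ is an equality pair, so any $M \subseteq D$ is a conjunction of equalities.

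Then suppose, for contradiction, that the algorithm asserts a $D$ that is a proper superset of some minimal definitive root cause $M$ (the case where the sanity-check loop returns $\emptyset$ is immediate, since $\emptyset$ contains no root cause). Since $M \subsetneq D$, there is a parameter $p^{*}$ with $(p^{*}, CP_f[p^{*}]) \in D$ but no pair on $p^{*}$ occurring in $M$. Consider the iteration that processed $p^{*}$, and let $T$ be the instance $CP_{\text{current}'}$ built there, i.e.\ the then-current instance with $p^{*}$ reset to $CP_g[p^{*}]$. I would show $Pv_T$ satisfies $M$: for every $(q,v)\in M$ we have $q\neq p^{*}$ and $(q,v)\in D$, hence $v = CP_f[q]$; and since $(q, CP_f[q])\in D$ the swap test for $q$ succeeded, so by the characterization above $q$ was never modified and $T[q] = CP_f[q] = v$. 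Thus $M \subseteq Pv_T$.

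Finally, because $(p^{*}, CP_f[p^{*}])\in D$, the swap test for $p^{*}$ succeeded, i.e.\ $E(T) = \succeed$; but $T$ was actually executed by the \shortcut algorithm, so $T$ is an instance in the universe $U$ with $E(T)=\succeed$ whose parameter-values satisfy $M$ — contradicting the defining property of a definitive root cause. Hence no such $p^{*}$ exists and $D$ cannot properly contain any minimal definitive root cause, which is exactly the claim. I expect the main obstacle to be the second step: getting the bookkeeping precisely right, and in particular making explicit where the Disjointness Condition is essential — without it a ``vacuous swap'' ($CP_g[p]=CP_f[p]$) would let $(p,CP_f[p])$ stay in $D$ even though the test evaluated to \fail, breaking the equivalence ``retained in $D$'' $\iff$ ``swap test succeeded'' on which the contradiction rests. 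The remaining steps are routine.
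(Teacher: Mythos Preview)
Your proposal is correct and follows the same contradiction strategy as the paper: pick a superfluous pair $(p^{*},CP_f[p^{*}])\in D\setminus M$, examine the swap-test instance $CP_{\text{current}'}$ built when $p^{*}$ is processed, and obtain a contradiction between ``$(p^{*},CP_f[p^{*}])\in D$ forces $E(CP_{\text{current}'})=\succeed$'' and ``$M$ is a definitive root cause forces $E(CP_{\text{current}'})=\fail$.'' Your version is more careful than the paper's terse argument---in particular you actually justify why $M$ is still satisfied by $CP_{\text{current}'}$ and pinpoint where disjointness is used---but the core idea is identical.
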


\begin{proof}
By contradiction. We assume that $\exists (p,v) \in D$, such that $(p,v)$ is not a necessary condition for an instance to \fail. By the construction at the beginning of the shortcut algorithm, if $(p,v) \in D$, $CP_f[p]=v$ and $CP_g[p]\neq v$ by the Disjointness Condition.

When the \shortcut algorithm iterates over parameter $p$, we observe $CP_{\text{current}}[p]=CP_f[p]$ and $CP_{\text{current'}}[p]=CP_g[p]$. 
Hence, since $(p,v)$ is not needed for an instance to \fail, at this
iteration, $E(CP_{\text{current'}})=\fail$, so $(p,v)$ would be
removed from {\text{current}} and therefore would never be asserted to
be part of the root cause. Contradiction.  
\end{proof}

To address the problem of truncated assertions, let us first observe another case when they do not arise, beyond the singleton case of Theorem~\ref{theorem:singleton}. 

\begin{example}\label{exp:sufficient}
Consider a slight modification of Example~\ref{exp:truncated}, where we add another parameter-value pair to $D_2$, defining the following scenario:
\begin{itemize}
    \item $D_1 = \{(p_1,v_1),(p_2,v_2)\}$
    \item $D_2 = \{(p_1,v_1'),(p_2,v_2''),(p_3,v_3)\}$
    \item $CP_f = \{(p_1,v_1),(p_2,v_2),(p_3,v_3)\}$
    \item $CP_g = \{(p_1,v_1'),(p_2,v_2'),(p_3,v_3')\}$
\end{itemize}
When iterating over parameter $p_1$, the \shortcut algorithm does not update $CP_{\text{current}}[p_1]=v_1$, since $E(CP_{\text{current}'})=\succeed$ because $D_1 \not\subseteq     CP_{\text{current}'}$ and $D_2 \not\subseteq CP_{\text{current}'}$. Similarly, the value of $CP_{\text{current}}[p_2]$ is not changed. Only $CP_{\text{current}}[p_3]$ is updated to $v_3'$.   
Thereafter, the algorithm would assert $D=\{(p_1,v_1),$ $(p_2,v_2)\}=D_1$ as minimal definitive root cause, which is correct.

\end{example}

In Example~\ref{exp:sufficient}, both $D_1$ and $D_2$ contain values for $p_1$ and $p_2$ that are distinct from their counterpart in the other definitive root cause, i.e., $D_1[p_1]\neq D_2[p_1]$ and $D_1[p_2]\neq D_2[p_2]$. We say that $D_1$ and $D_2$ are {\em sufficiently different}. This characteristic  directly influences when the \shortcut algorithm will yield truncated assertions and is formally defined as follows.

\begin{definition}[Sufficiently different instances]
Two definitive root causes $D_x$ and $D_y$ 
are {\bf sufficiently different} if (i) they share at least two properties and (ii) for all properties they have in common they differ in their values. Formally,

(i)
$| P_{D_x} \cap P_{D_y} | \ge 2$; 

(ii) and 
$D_1[p] \neq D_2[p], \forall p \in P_{D_x} \cap P_{D_y}$.
\end{definition}

\begin{theorem}
If the Disjointness Condition holds and all minimal definitive root causes are pairwise sufficiently different, then
the shortcut algorithm will never produce a truncated assertion.
\label{theorem:sufficientlydifferent}
\end{theorem}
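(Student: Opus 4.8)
The plan is to argue by contradiction, following the style of the earlier proofs. Suppose the Disjointness Condition holds, all minimal definitive root causes are pairwise sufficiently different, yet the \shortcut algorithm produces a truncated assertion $D$. By Theorem~\ref{theorem:minimality} we know $D$ is never a superset of a minimal definitive root cause, so "truncated" must mean $D$ is a \emph{proper} subset of some minimal definitive root cause $D_j$. Fix that $D_j$ and note that $D \subsetneq D_j$, so there is at least one pair $(p^*, v^*) \in D_j \setminus D$. The output $D$ equals $CP_{\text{current}} \cap CP_f$ at termination, so $(p^*,v^*) \notin CP_{\text{current}}$, which means the algorithm overwrote parameter $p^*$: during the iteration over $p^*$ the test instance $CP_{\text{current}'}$ (with $CP_{\text{current}'}[p^*] = CP_g[p^*]$) evaluated to \fail, and that failure was caused by some minimal definitive root cause $D_k \subseteq CP_{\text{current}'}$. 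Since $(p^*, v^*)$ is no longer present in $CP_{\text{current}'}$ (it was just overwritten with $CP_g[p^*] \neq v^*$) but $v^* = CP_f[p^*] \in D_j$, we have $D_k \neq D_j$; so $D_j$ and $D_k$ are two distinct minimal definitive root causes, hence sufficiently different.

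Next I would exploit sufficient difference to derive the contradiction. Because $D_j$ and $D_k$ are sufficiently different, they share at least two parameters, say $p^*$ is one of them (we know $p^* \in P_{D_j}$; I need to argue $p^* \in P_{D_k}$, which follows because the only way the overwrite of $p^*$ could have \emph{created} a new failure witnessed by $D_k$ is if $p^*$'s new value $CP_g[p^*]$ participates in $D_k$, i.e. $D_k[p^*] = CP_g[p^*]$ — otherwise $D_k$ would already have been a subset of $CP_{\text{current}}$ before the overwrite, contradicting that $CP_{\text{current}}$ was reached by a chain of \fail instances each of which we want to re-examine). Let $q$ be a second shared parameter, $q \in P_{D_j} \cap P_{D_k}$ with $q \neq p^*$. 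By sufficient difference $D_j[q] \neq D_k[q]$. Now track what happens to parameter $q$ during the run: at the moment $p^*$ was processed, $CP_{\text{current}'}$ contained $D_k$, so in particular $CP_{\text{current}'}[q] = D_k[q]$. But for $D_j$ to end up a \emph{superset} of the output $D$, the pair $(q, D_j[q])$ with $D_j[q] \neq D_k[q]$ cannot also be in $D$ — yet the truncated-assertion scenario only forces $D \subsetneq D_j$, it does not tell us $q \in P_D$. So I instead chase the iteration order: consider whichever of $p^*, q$ is processed first. Before either is processed, $CP_{\text{current}}$ still agrees with $CP_f$ on both, i.e. holds $D_j[p^*]$ and $D_j[q]$; after processing the first of them, it holds $CP_g$'s value there. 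Processing that first parameter must have returned \fail (else the value would not have been overwritten and that coordinate of $D_j$ would survive into $D$, keeping $(p^*,v^*)\in D$ or the analogous pair — contradicting $D \subsetneq D_j$ at that coordinate). That \fail is witnessed by some minimal root cause $D_m$; and $D_m$ cannot contain $D_j[\text{first param}]$ because that value was just overwritten; so $D_m \neq D_j$, hence $D_j, D_m$ are sufficiently different, sharing $\geq 2$ parameters with distinct values on all shared ones.

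The crux, and the step I expect to be the main obstacle, is turning this "each overwrite is witnessed by a different, sufficiently-different root cause" observation into a genuine contradiction rather than an infinite regress. The clean way is to argue about the \emph{specific} root cause $D_j$ that $D$ truncates and the parameters it shares with whichever competitor caused each overwrite: sufficient difference guarantees $|P_{D_j} \cap P_{D_k}| \ge 2$ and disagreement on every shared parameter, so at the moment $D_k$ first became a subset of $CP_{\text{current}'}$, \emph{all} of $D_k$'s values were present, meaning on every shared parameter $CP_{\text{current}'}$ carried $D_k$'s value, not $D_j$'s. Pick the shared parameter that the algorithm processes \emph{last}; just before that step $CP_{\text{current}}$ already carried $D_k$'s value on the other $\geq 1$ shared parameters but still carried $D_j$'s (equivalently $CP_f$'s) value on the last one. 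Then neither $D_j$ nor $D_k$ is a subset of $CP_{\text{current}'}$ at that step — $D_j$ fails on the already-overwritten shared coordinate, $D_k$ fails on the one being processed — and by minimality no \emph{other} root cause can be a subset either (any root cause that is a subset and disagrees with $D_j$ on a coordinate where $CP_{\text{current}}$ matches $CP_f$ would itself be sufficiently different from $D_j$, and one keeps descending; the descent terminates because there are finitely many root causes and each step strictly changes which root cause is "active"). Hence at that step $E(CP_{\text{current}'}) = \succeed$, so the algorithm does \emph{not} overwrite that last shared parameter, so $D_j$'s value on it survives into $D$ — and iterating this argument over all of $P_{D_j}$ shows every coordinate of $D_j$ survives, so $D \supseteq D_j$, contradicting that $D$ is a proper subset. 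I would present the argument by picking, among all root causes that $D$ could truncate, and among all parameters, an extremal one (last processed) to kill the regress; making that extremality argument airtight — in particular showing no third root cause sneaks in to cause the overwrite — is the delicate part, and I would lean on the pairwise sufficient-difference hypothesis applied to \emph{every} pair of root causes simultaneously to close it.
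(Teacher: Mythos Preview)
Your proposal is considerably more elaborate than the paper's argument and contains a real gap; moreover, the paper sidesteps precisely the difficulty you flag as ``delicate.''

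The paper argues \emph{forward} rather than backward. Fix any minimal definitive root cause $D_x \subseteq CP_f$ (one exists since $E(CP_f)=\fail$), and consider the \emph{first} parameter $p \in P_{D_x}$, in the algorithm's iteration order, for which the replacement test $CP_{\text{current}'}$ would evaluate to \fail. At that moment every $q \in P_{D_x}\setminus\{p\}$ still satisfies $CP_{\text{current}'}[q]=D_x[q]$, because by choice of $p$ no earlier parameter of $P_{D_x}$ was overwritten. Now $D_x \not\subseteq CP_{\text{current}'}$ (we just changed coordinate $p$), and for \emph{every} other minimal root cause $D_z$, sufficient difference with $D_x$ supplies at least two shared parameters, hence some $q\in P_{D_x}\cap P_{D_z}$ with $q\neq p$ and $D_z[q]\neq D_x[q]=CP_{\text{current}'}[q]$; so $D_z\not\subseteq CP_{\text{current}'}$ either. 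Thus $CP_{\text{current}'}$ contains no root cause and must \succeed, contradicting the assumed \fail. Hence no $p\in P_{D_x}$ is ever overwritten, $D_x \subseteq D$, and by Theorem~\ref{theorem:minimality} we get $D=D_x$: no truncation.

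The key economy is that pairwise sufficient difference is invoked only between the \emph{fixed} $D_x$ and each competitor, once; the ``$\geq 2$ shared parameters'' clause guarantees a witness $q\neq p$ that is still pinned to $D_x$'s value. There is no regress, no need to identify which root cause triggers which overwrite, and no third-root-cause problem.

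Your backward argument, starting from the truncated output and chasing witnesses, breaks at a specific step. Your justification that $p^*\in P_{D_k}$ is incorrect: you say that otherwise $D_k$ would already lie in $CP_{\text{current}}$ before the overwrite, ``contradicting that $CP_{\text{current}}$ was reached by a chain of \fail instances''; but $CP_{\text{current}}$ is \emph{always} a failing instance throughout the algorithm, so its containing some root cause (possibly $D_k$ itself) is no contradiction at all. Without $p^*\in P_{D_k}$ you cannot link $D_j$ and $D_k$ through shared parameters as you intend, and the extremality scheme you sketch to rule out a third root cause is left as an acknowledged loose end. The paper's forward argument simply avoids all of this by anchoring everything to the single $D_x$ present in $CP_f$.
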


\begin{proof}
  By contradiction.  Suppose there are two sufficiently different minimal definitive root
  causes $D_x$ and $D_y$, such that $D_x \subseteq CP_f$,
  $CP_{\text{current}}$ is initialized to $CP_f$, and at some point
  the \shortcut algorithm creates an instance $CP_{\text{current}} $
  such that $ D_y \subseteq CP_{\text{current}}$. We will show that
  this cannot happen.

  Consider the first parameter $p \in P_{D_x}$, such that 
\[
 CP_{\text{current'}}[p]=CP_g[p]  \ and \ 
 E(CP_{\text{current'}})=\fail
\]
Now, $D_x  \not\subseteq CP_{\text{current'}}$ because $D_x$ and $D_y$ differ on at least two properties. In addition,  $ D_y  \not\subseteq CP_{\text{current'}}$, since $CP_{\text{current'}}[p] $ is taken from $P_{D_x}$. Therefore, $E(CP_{\text{current'}})=\succeed$ because of the pairwise sufficient difference condition. Therefore, $CP_{\text{current}}[p]$ will not change its value. Thus, $ D_y \subseteq CP_{\text{current}}$ will never occur. 
\end{proof}

\paragraph{Stacked Shortcut Algorithm.}
Clearly, we cannot be sure {\em a priori} that all definitive root causes are single parameter-value pairs or that the minimal definitive root causes are sufficiently different, either of which would  ensure that the \shortcut makes no truncated assertions. However, even if neither holds, we may be able to avoid truncated assertions by a specific reapplication of \shortcut.

To see how, we first observe that \shortcut makes truncated assertions only if all elements of a minimal root cause are contained in the union of $CP_f$ and $CP_g$. This {\em union property} is formally described in Theorem~\ref{thm:union}. 

\begin{theorem}\label{thm:union}
The shortcut algorithm will yield a truncated assertion 
for a given $CP_f$ and $CP_g$ only if there is a minimal definitive root cause $D$, such that $D\subseteq CP_f \cup CP_g$ and $D \not\subset CP_f$.
\end{theorem}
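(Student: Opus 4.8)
The plan is to prove the ``only if'' directly: assume \shortcut yields a truncated assertion on inputs $CP_f$ and $CP_g$, and exhibit a minimal definitive root cause $D$ with $D \subseteq CP_f \cup CP_g$ and $D \not\subset CP_f$. Write $CP_{\text{final}}$ for the value of $CP_{\text{current}}$ at the end of the main loop and $D_{\text{out}} = CP_{\text{final}} \cap CP_f$ for the candidate computed just before the sanity-check loop. Two loop invariants, each established by a one-step induction, do the heavy lifting: (i) $CP_{\text{final}} \subseteq CP_f \cup CP_g$, since $CP_{\text{current}}$ is initialized to $CP_f$ and each reassignment merely swaps one parameter's value for the corresponding value of $CP_g$; and (ii) $E(CP_{\text{final}}) = \fail$, since $CP_{\text{current}}$ starts at the failing instance $CP_f$ and is reassigned only to instances $CP_{\text{current}'}$ that were just evaluated to \fail.

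The next ingredient is that every complete assignment that evaluates to \fail is itself a definitive root cause: regarded as the conjunction of equalities over all parameters, the only instances satisfying it are copies of it, and by the deterministic-evaluation assumption all such copies \fail, so no successful instance satisfies it --- which is exactly what the definition requires. Applying this to $CP_{\text{final}}$ (using invariant (ii)) and noting that the family of its subsets that are definitive root causes is nonempty and finite, we obtain a minimal definitive root cause $D \subseteq CP_{\text{final}}$. By invariant (i), $D \subseteq CP_f \cup CP_g$, so it remains only to rule out $D \subseteq CP_f$.

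Suppose $D \subseteq CP_f$. With $D \subseteq CP_{\text{final}}$ this gives $D \subseteq CP_{\text{final}} \cap CP_f = D_{\text{out}}$. Since $D$ is a definitive root cause, no successful instance of the universe satisfies $D$, hence none satisfies the superset $D_{\text{out}}$; in particular the sanity-check loop finds no successful $CP_i$ with $D_{\text{out}} \subseteq CP_i$, so the algorithm returns exactly $D_{\text{out}}$. But ``truncated assertion'' means the returned set is a \emph{proper} subset of some minimal definitive root cause $D^{\star}$, so $D \subseteq D_{\text{out}} \subsetneq D^{\star}$, whence $D \subsetneq D^{\star}$; this contradicts the minimality of $D^{\star}$, since $D$ is itself a definitive root cause. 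Therefore $D \not\subseteq CP_f$, and in particular $D \not\subset CP_f$, completing the argument.

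The step I expect to require the most care is the definitional bookkeeping around ``definitive root cause'': one must consistently evaluate it relative to the whole universe $U$ rather than the input history $CPI$, and the chain ``$CP_{\text{final}}$ \fail's $\Rightarrow$ $CP_{\text{final}}$ is a definitive root cause $\Rightarrow$ it contains a minimal one'' genuinely uses that $E$ is deterministic (otherwise a failing instance need not be a root cause and the construction collapses). Everything else reduces to routine set manipulation together with the two loop invariants; note also that the branch in which the algorithm returns $\emptyset$ never co-occurs with the assumption $D \subseteq CP_f$, so no separate case analysis of the sanity check is needed.
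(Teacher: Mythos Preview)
Your proof is correct and follows essentially the same approach as the paper's: both hinge on the two invariants (that $CP_{\text{current}}$ always lies in $CP_f \cup CP_g$ and always evaluates to \fail), then extract a minimal definitive root cause from the final failing instance and argue it cannot lie entirely in $CP_f$. Your treatment is considerably more explicit than the paper's sketch---in particular, your contradiction argument for $D \not\subseteq CP_f$ (via $D \subseteq D_{\text{out}} \subsetneq D^{\star}$ violating minimality of $D^{\star}$) spells out what the paper compresses into the single clause ``$CP_{\text{current}}$ must have elements from $CP_g$ that cause $CP_{\text{current}}$ to evaluate to \fail.''
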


\begin{proof}
In the course of the \shortcut algorithm, all property values in $CP_{\text{current}}$ come from $CP_f$ or $CP_g$. By construction, the asserted root cause is the intersection of $CP_f$ and $CP_{\text{current}}$. So if the asserted root cause is truncated, $CP_{\text{current}}$ must have elements from $CP_g$ that cause $CP_{\text{current}}$ to evaluate to $\fail$. Therefore there is a minimal definitive root cause in the union of $CP_f$ and $CP_g$.
\end{proof}

Based on the previous theorems, we extended the shortcut algorithm to the \stacked algorithm which basically runs a given failed configuration $CP_f$ individually against multiple disjoint good configurations and then takes the union of the inferred root causes.  Algorithm~\ref{algo:stacked} shows the algorithm's pseudo-code. \stacked  is guaranteed to produce a correct solution if \ourapproach can find $k$ {\bf mutually disjoint} successful instances, and there are at most $k$ distinct minimal root causes.   

Recall that two instances $CP_1$ and $CP_2$ are  disjoint if they have different values for all properties. That is, $ \forall p CP_1[p] \neq CP_2[p]$. A set of instances is mutually disjoint if every pair of instances are disjoint.

\begin{algorithm}[ht]
\SetAlgoLined
\KwIn{$CPI$, the set of pipeline instances in the execution history characterized by their parameter-values}
\KwIn{$E$, the evaluation function}
\KwIn{$P$, list of parameters}
\KwOut{$D$, asserted minimal definitive root cause}

\tcc{Initialization}	
$D  \gets \emptyset$\;
\tcc{Find an instance that evaluates to \fail}
Let $CP_f$ be such that $CP_f \in CPI$, and $E(CP_f)=\fail$\;

\tcc{Find $k$ successful instances disjoint with respect to $CP_f$ and mutually disjoint if possible}
$CPG \gets \{CP_1, CP_2, ...,CP_k\}$, such that $CP_i$, for $i\in\{1,2,...k\}$, are mutually disjoint and $E(CP_i)=\succeed$\;

\For{$CP_g \in CPG$}{
    $D  \gets D \cup \textit{shortcut}(CPI,E,P,CP_f,CP_g)$\;    
}

\KwRet{$D$}
 \caption{\stacked Algorithm}\label{algo:stacked}
\end{algorithm}

\begin{theorem}
If all $CP_i$, such that $E(CP_i)=\succeed$, for $i \in \{1,2,...,k\}$, are mutually disjoint and disjoint from $CP_f$, and there are fewer than or equal to $k$ distinct minimal definitive root causes, then the \stacked Algorithm will never make a truncated assertion.
\end{theorem}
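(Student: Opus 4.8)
The plan is to argue by contradiction. Suppose the \stacked algorithm returns $D=\bigcup_{i=1}^{k}D_i$, where $D_i=\textit{shortcut}(CPI,E,P,CP_f,CP_i)$, and suppose $D$ is a truncated assertion, i.e., $D\subsetneq D^{*}$ for some minimal definitive root cause $D^{*}$. The heart of the proof is a counting argument: under the hypotheses, at least one of the $k$ successful instances forces its \shortcut call to return \emph{exactly} a minimal definitive root cause that is contained in $CP_f$, and that alone contradicts $D\subsetneq D^{*}$.

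First I would record a preliminary observation. Since $E(CP_f)=\fail$ and $CP_f$ is a total parameter assignment, $CP_f$ is itself a definitive root cause (the only instance satisfying all the equalities of $CP_f$ is $CP_f$, which fails), so $CP_f$ contains at least one minimal definitive root cause. Let $\mathcal{D}$ be the set of all minimal definitive root causes and put $m=|\mathcal{D}|\le k$; then at least one element of $\mathcal{D}$ is a subset of $CP_f$, so at most $m-1\le k-1$ elements of $\mathcal{D}$ are \emph{not} subsets of $CP_f$. Also, each $D_i$ is either $\emptyset$ or the intersection of $CP_f$ with a terminal current instance, hence $D_i\subseteq CP_f$.

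Now for the key step. Say that $D'\in\mathcal{D}$ with $D'\not\subseteq CP_f$ \emph{spoils} $CP_i$ if $D'\subseteq CP_f\cup CP_i$. Since $D'\not\subseteq CP_f$, it contains a pair $(p^{*},v^{*})$ with $v^{*}\ne CP_f[p^{*}]$; if $D'$ spoils $CP_i$, then $(p^{*},v^{*})\in CP_f\cup CP_i$ forces $v^{*}=CP_i[p^{*}]$, and by mutual disjointness of the $CP_i$ this determines $i$ uniquely. Hence each such $D'$ spoils at most one $CP_i$, so the (at most $k-1$) elements of $\mathcal{D}$ not contained in $CP_f$ spoil at most $k-1$ of the $k$ indices in total, leaving some index $j$ unspoiled. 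For that $j$: by the contrapositive of Theorem~\ref{thm:union}, $\textit{shortcut}(CP_f,CP_j)$ makes no truncated assertion. Its terminal $CP_{\text{current}}$ still evaluates to \fail (the current instance is only ever updated to failing instances) and satisfies $CP_{\text{current}}\subseteq CP_f\cup CP_j$, so $CP_{\text{current}}$ contains some $D'\in\mathcal{D}$; being unspoiled forces $D'\subseteq CP_f$, hence $D'\subseteq CP_{\text{current}}\cap CP_f$. By Theorem~\ref{theorem:minimality}, $CP_{\text{current}}\cap CP_f$ is not a proper superset of any minimal definitive root cause, so it equals $D'$; and since $D'$ is a definitive root cause it is contained in no succeeding instance, so the sanity-check branch of \shortcut does not discard it. Therefore $D_j=D'$ with $D'\in\mathcal{D}$ and $D'\subseteq CP_f$.

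To finish, $D'\subseteq D_j\subseteq D\subsetneq D^{*}$ yields $D'\subsetneq D^{*}$, a definitive root cause properly contained in the minimal definitive root cause $D^{*}$, which is impossible. Hence $D$ is not a truncated assertion. I expect the main obstacle to be the counting step and its bookkeeping: one must see that the single fact ``at least one minimal root cause lives inside $CP_f$'' is exactly what drops the spoiler bound from $k$ to $k-1$ and thereby guarantees an unspoiled index, and one must carefully chain Theorem~\ref{thm:union} with Theorem~\ref{theorem:minimality} (plus the remark that the sanity check cannot discard a genuine definitive root cause) to upgrade ``no truncated assertion'' into ``returns exactly a minimal definitive root cause.''
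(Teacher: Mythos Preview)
Your proof is correct and follows essentially the same counting argument as the paper: each minimal definitive root cause not contained in $CP_f$ can ``spoil'' at most one of the mutually disjoint $CP_i$, and since at least one minimal root cause lies inside $CP_f$ there are at most $k-1$ spoilers, leaving an unspoiled index where Theorem~\ref{thm:union} applies. Your write-up is in fact more careful than the paper's terse proof---you make explicit the $k-1$ bound, handle the sanity-check branch, and chain Theorems~\ref{thm:union} and~\ref{theorem:minimality} to reach the contradiction---but the underlying idea is the same.
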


\begin{proof}
By construction. For each other minimal definitive root cause $D \not\subseteq CP_f$, there can be at most one $CP_i$ with the property that $D \subseteq CP_i \cap CP_f$, since all instances are disjoint. Because there are fewer than $k$ distinct minimal definitive root causes by assumption, there exists at least one $CP_i$, which does not have the union property with respect to $CP_f$.
So, by the construction of $D$, the \stacked algorithm will yield an assertion (candidate root cause) that is not truncated.%
\end{proof}

Note that even if all successful instances are not mutually disjoint (perhaps because some parameters have very few values), each additional call to \textit{shortcut} (i.e., each call to \shortcut with a different disjoint good instance) reduces the likelihood of yielding a truncated assertion. The reason is that the second-to-last line of the \stacked algorithm can only grow the hypothetical root causes.

Finally, note that both \shortcut and \stacked are linear in the number of parameters, a very useful property when there are hundreds of parameters having at least two values each.

\subsection{Finding Bugs with Inequalities: Debugging Decision Trees}\label{sec:decision}

While the \shortcut and \stacked algorithms can find a single minimal definitive
root cause very efficiently, usually without truncation (as we will
see in the experimental section), characterizing all minimal definitive root
causes is challenging. For this purpose, we use an algorithm
that is exponential (in the number of parameters) in the worst case,
but can characterize inequalities as well as equalities and does well
heuristically even with a small
budget~\cite{Lourenco2019}. 

The algorithm constructs a \emph{debugging decision tree} using the
parameters of the pipeline as features and the evaluation of the
instances as the target. Thus a leaf is either purely \succeed,
if all pipeline instances so far tested that lead to that leaf evaluate to \succeed; or
\fail, if all pipeline instances leading to that leaf evaluate to \fail,
or \emph{mixed}.  The algorithm works as follows:

\begin{enumerate}
\item Given an initial set of instances $CPI$, construct a decision tree based on the evaluation results for those instances (\succeed or \fail). An inner node of the decision tree is a triple (\textit{Parameter},\textit{Comparator},\textit{Value}), where the \textit{Comparator} indicates whether a given \textit{Parameter} has a value equal to, greater than (or equal to), less than (or equal to), or unequal to \textit{Value}.  

\item If a conjunction involving a set of parameters, say, $P_1$ $P_2$, and $P_3$, leads to a consistently failing execution (a pure leaf in decision tree terms), then that combination becomes a suspect.

\item Each suspect is used as a filter in a Cartesian product of the parameter values from which new experiments will be sampled.
\end{enumerate}

Step 3 requires some explanation. 
Consider an example where all comparators denote equality. Suppose a path in the decision tree consists of  $P_1=v_1$, $P_2=v_2$, and $P_3=v_3$. To test that path, all other parameters will be varied. If every instance having the parameter-values $P_1=v_1$, $P_2=v_2$, and $P_3 = v_3$ leads to failure, then  that conjunction constitutes a \emph{definitive root cause of failure}. 

If the path consists of non-equality comparators (e.g., $P_1=v_1$, $P_2=v_2$, and $P_3 > 6$), then the algorithm chooses a satisfying value for each of those parameters as a prototype, (e.g., $P_3=7$) and choose pipeline instances having those values (e.g., all pipelines $P_1=v_1$, $P_2=v_2$, and $P_3 = 7$).
Conversely, if any of the newly generated instances presents a  (\succeed) pipeline instance, the decision tree is rebuilt, taking into account the whole set of executed pipeline instances $CPI$, and a new suspect path is tried.

Note  that \ourapproach uses decision trees in an unusual way. We are
not trying to predict whether an untested configuration will lead to
\succeed or \fail, but simply use the tree to discover short paths,
possibly characterized by inequalities, that lead to \fail. Those will
be our suspects. For that reason, we build a complete decision tree,
i.e., with no pruning.

\subsection{Parallelism}
The most time-consuming aspect of debugging is the execution of pipeline instances. Fortunately,  each pipeline instance is independent.
Hence different instances can be run in parallel. However, such an approach may lead to the execution of pipelines that are ultimately unnecessary (e.g., if one pipeline instance shows that $A.v$ is not a definitive root cause, then further tests on $A.v$ may not be useful). If the search space is large, this extra overhead turns out to be small, as we show in Section~\ref{sec:scalability}.

\section{Experimental Evaluation}\label{sec:experiments}
To evaluate the effectiveness of \ourapproach, we compare it against state-of-the-art methods for deriving explanations as well as for hyperparameter optimization, using both real and synthetic pipelines. 
We examine different scenarios, including when a single minimal definitive root cause is sought and when a budget for the number of instances that can be run is set. 
We also evaluate the scalability of \ourapproach when multiple cores are available to execute pipeline instances in parallel, and when the number of parameters and values increase.

\paragraph{Baselines.} Because no previous approach both creates new instances and derives explanations, we compare our approach against combinations of state-of-the-art methods.
We use Data X-Ray~\cite{Wang:2015:DXD:2723372.2750549} and Explanation Tables~\cite{GebalyFGKS14} to derive explanations. To generate instances for all explanation algorithms, we use both the instances from \ourapproach and Sequential Model-Based Algorithm Configuration (SMAC)~\cite{HutHooLey11-smac}.

SMAC is a method for hyperparameter optimization that is often more effective
at searching configuration spaces than grid search~\cite{Bergstra:2012:RSH:2188385.2188395}. 
We also ran experiments using random search as an alternative, i.e., randomly generating instances and then analyzing them. However, the results were always worse than those obtained using SMAC or \ourapproach. Therefore, for simplicity of presentation and to avoid cluttering the plots, we omit the random search results.  

The explanation approaches analyze the provenance of the pipelines, i.e., the instances previously run and their results, but do not suggest new ones.  By contrast, SMAC iteratively proposes new pipeline instances, but it always outputs a complete pipeline instance: the best it can find given a budget of instances to run and a criterion. This procedure makes sense for SMAC's primary use case, which is to find a set of parameter-values that performs well, but it is less helpful for debugging because it does not attempt to find a minimal root cause. For example, if a minimal definitive root cause of a pipeline is that parameter $P_i$ must have a value of $5$, SMAC will return a pipeline that fails, which has $P_i$ set to $5$. But since the pipeline may have many other parameter-values, the user has no way of knowing that $P_i = 5$ is the minimal definitive root cause and thus gains no insight into how to rectify the bug.

To give the explanation methods a reasonable chance to find minimal root causes,  we combine the explanations with the generative techniques. We apply Data X-Ray and Explanation Tables to suggest root causes for the pipeline instances generated by SMAC, and also feed both methods with the instances created by \ourapproach. Since SMAC looks for good instances, mostly for machine learning pipelines, we change its goal to look for bad pipeline instances. 

\paragraph{Evaluation Criteria.} 
We consider two goals: (i) {\em FindOne} -- find at least one minimal definitive root cause in each pipeline; (ii) {\em FindAll} -- find all minimal definitive root causes. The use case for {\em FindOne} is a debugging setting where it might be useful to work on one bug at a time, in the hope that resolving one may resolve or at least mitigate others. The use case for {\em FindAll} is when a team of debuggers can work on many bugs in parallel. {\em FindAll} may also be useful to provide an overview of the set of issues encountered.
We use precision and recall to measure quality. These are defined differently for the {\em FindOne} case than for the {\em FindAll} case.

Formally, let $UCP$ be a set of computational pipelines, where each pipeline $CP \in UCP$ (e.g., the pipeline of Figure~\ref{fig:pipeline}) is associated with a set of minimal definitive root causes $R(CP)$. 
Given a set of root causes $A(CP)$ asserted by an algorithm $A$ on pipeline $CP$ for the {\em FindOne} case, we check if $A(CP)$ has at least one actual root cause. Precision is then the number of computational pipelines for which at least one minimal definitive root cause is found divided by the sum of the total number of pipelines where at least one minimal definitive root cause is found and the number of false positives (predicted root causes that are not, in fact, minimal definitive root causes). Formally, the \textit{precision for FindOne} is:
\begin{equation*}
   \frac{\sum_{CP \in UCP} |A(CP) \cap R(CP)\neq \emptyset| }{\sum_{CP \in UCP} |A(CP) \cap R(CP)\neq \emptyset| +  |A(CP) - R(CP)|}
\end{equation*}
\noindent where  $A(CP) \cap R(CP) \neq \emptyset$ evaluates to 1 if $A(CP)$ corresponds to at least one of the conjuncts in $R(CP)$. Recall for {\em FindOne} is the fraction of the $|UCP|$ pipelines for which a minimal definitive root cause is found by $A$. The \textit{recall for FindOne} is thus:
\begin{equation*}
   \frac{\sum_{CP \in UCP} |A(CP) \cap R(CP) \neq \emptyset| }{|UCP|} 
\end{equation*}
\noindent  For {\em FindAll}, precision is the fraction of root causes that $A$ identifies that are, in fact, minimal definitive root causes. The \textit{precision for FindAll} is defined as: 
\begin{equation*}
   \frac{\sum_{CP \in UCP} |A(CP) \cap R(CP)| }{\sum_{CP \in UCP} |A(CP)|}
\end{equation*}
\noindent \emph{Recall for FindAll}  is the fraction of all 
the $R(CP)$ minimal definitive root causes, for all $CP \in UCP$, that are found by the algorithms: 
\begin{equation*}
   \frac{\sum_{CP \in UCP} |A(CP) \cap R(CP)|}{\sum_{CP \in UCP} |R(CP)|} 
\end{equation*}
\noindent For both {\em FindOne} and {\em FindAll}, we also report the F-measure, i.e., the harmonic mean of their respective measures of precision and recall.
\begin{equation*}
   \textit{F-measure}=2\times\frac{\textit{Precision}\times\textit{Recall}}{\textit{Precision}+\textit{Recall}} 
\end{equation*}

Our first set of tests allows \ourapproach to find at least one minimal definitive root cause using each of its algorithms (\shortcut, \stacked, and \debuggingdecisiontrees). The experiment then grants the same number of instances to all other methods. Thus, the precision and recall for each algorithm is based on the same instance budget.%

In these tests, Data X-Ray and Explanation Tables are given (i) the instances  generated by \ourapproach and, in a separate test, (ii) the instances generated by SMAC.

\paragraph{Pipeline Benchmark.} We evaluate our approach using both synthetic and real pipelines. 
We have created synthetic data that reflect typical pipelines in data science and computational science, which often involve multiple components and associated parameters. The pipelines have between three and fifteen parameters, and each parameter has between five and thirty values.
The parameter values are either ordinal (e.g., temperature) or categorical (e.g., color), each with probability 1/2. Each synthetic pipeline consists of a parameter space and a definitive root cause of failure automatically generated as follows:    

\begin{enumerate}
    \item We uniformly sample a non-empty subset of parameters to be part of a conjunction.
    \item For each parameter in the subset, we uniformly sample from its values.
    \item For each parameter-value pair,  we uniformly sample from the set of comparators $C = \{ = , \le, >, \neq \}$. 
    \item After adding a conjunctive root cause, we add another conjunctive root cause with a certain probability. 
\end{enumerate}

The example below illustrates the parameter space and the definitive root cause for one of the synthetic pipelines.

\begin{example}
A pipeline having  three parameters with four possible values each could be characterized as follows: 

\begin{itemize}
    \item Parameter Space: $p_1 \in \left[1.0,2.0,3.0,4.0\right]$, $p_2 \in \left[1,2,3,4\right]$, and $p_3 \in \left[``p31",``p32",``p33",``p34"\right]$. 
    \item Minimal definitive Root Cause : $(p_1 = 4)$ or $(p_2 < 3.0$  and  $p_3 \neq ``p34")$.
\end{itemize}
\end{example}

We also evaluate the debugging strategies on real-world computational pipelines (see Section~\ref{sec:real}).

\paragraph{Implementation and Experimental Setup.}
The current prototype of \ourapproach %
contains a dispatching component that runs in a single thread and spawns multiple pipeline instances in parallel.
In our experiments, we used five execution engine workers to run the instances. 

We used the SMAC version for Python 3.6. We also used the code, implemented by the respective authors, for both the Data X-Ray algorithm (implemented in Java 7)~\cite{Wang:2015:DXD:2723372.2750549} and Explanation Tables~\cite{GebalyFGKS14} (written in python 2.7).
Since Data X-Ray does not generate new tests, we use the pipeline instances created by \ourapproach as input to the feature model input of Data X-Ray. Separately, we converted the pipeline instances created by SMAC as input to the feature model of Data X-Ray. 
Similarly, we used the pipeline instances generated by both \ourapproach and SMAC to populate the database schema required by Explanation Tables. 

All experiments were run on a Linux Desktop (Ubuntu 14.04, 32GB RAM, 3.5GHz $\times$ 8 processor). For purposes of reproducibility and community use, we made our code and experiments available (\url{https://github.com/ViDA-NYU/BugDoc}).

\subsection{Synthetic Pipelines}\label{sec:synthetic}

The results for the synthetically generated pipelines are reported according to the characteristics of their definitive root causes. The characteristics span three scenarios, consisting of multiple pipelines and covering different lengths of definitive root causes: 

\begin{enumerate}
\item a single parameter-comparator-value triple; 
\item a single conjunction of triples containing parameter-comparator-value; and
\item a disjunction of conjunctions of parameter-comparator-value triples.
\end{enumerate}

\noindent These scenarios are useful to assess the generality and expressiveness of the different approaches to explanation.

\begin{figure*}[t]
  \centering
  \includegraphics[width=\textwidth]{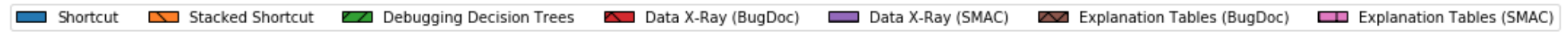}
  \centering
  \subcaptionbox{Precision \label{fig:single_one_precision}}{
    \includegraphics[width=0.67\columnwidth]{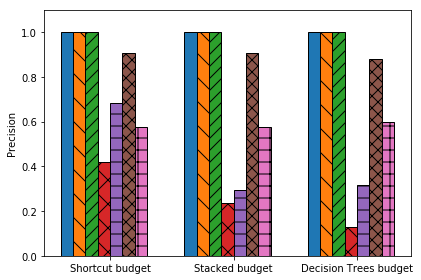}
  }
  \subcaptionbox{Recall
  \label{fig:single_one_recall}}{
    \includegraphics[width=0.67\columnwidth]{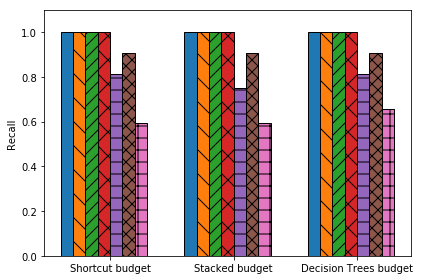}
  }
  \subcaptionbox{F-measure\label{fig:single_one_score}}{
    \includegraphics[width=0.67\columnwidth]{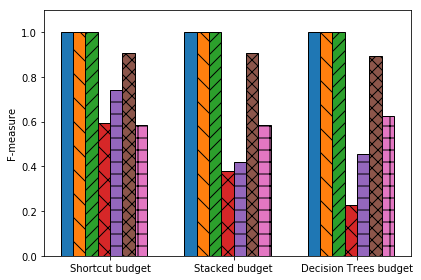}
  }
  \centering
  \subcaptionbox{Precision \label{fig:conjunction_one_precision}}{
    \includegraphics[width=0.67\columnwidth]{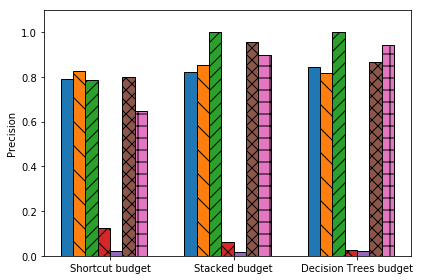}
  }
  \subcaptionbox{Recall
  \label{fig:conjunction_one_recall}}{
    \includegraphics[width=0.67\columnwidth]{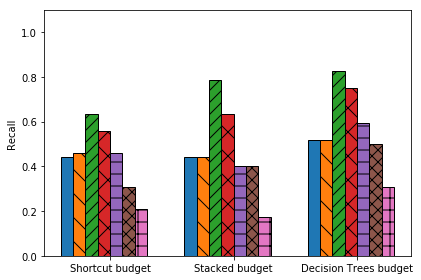}
  }
  \subcaptionbox{F-measure\label{fig:conjunction_one_score}}{
    \includegraphics[width=0.67\columnwidth]{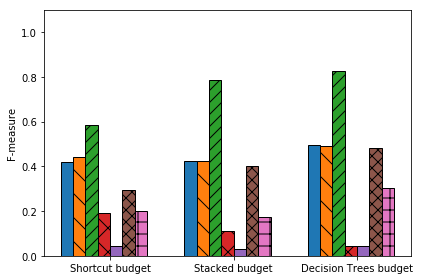}
  }
  \centering
  \subcaptionbox{Precision \label{fig:disjunction_one_precision}}{
    \includegraphics[width=0.67\columnwidth]{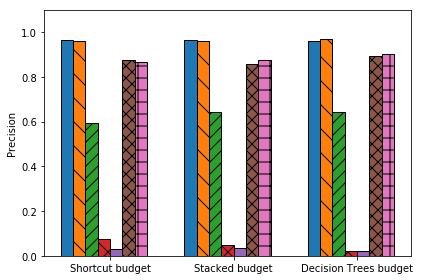}
  }
  \subcaptionbox{Recall
  \label{fig:disjunction_one_recall}}{
    \includegraphics[width=0.67\columnwidth]{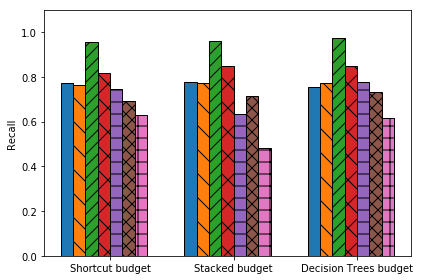}
  }
  \subcaptionbox{F-measure\label{fig:disjunction_one_score}}{
    \includegraphics[width=0.67\columnwidth]{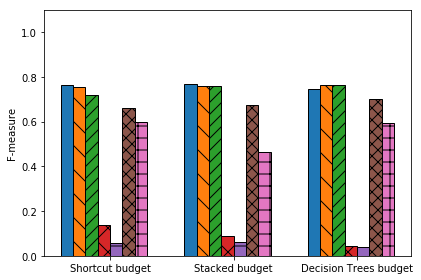}
  }
\vspace{-.3cm}
  \caption{Synthetic Pipelines. Metrics for the {\em FindOne} problem  when the root cause is a single parameter-value-comparator (top row, Figures~\ref{fig:single_one_precision},~\ref{fig:single_one_recall}, and~\ref{fig:single_one_score}), a single conjunction (middle row, Figures~\ref{fig:conjunction_one_precision},~\ref{fig:conjunction_one_recall}, and~\ref{fig:conjunction_one_score}), or a disjunction of conjunctions (bottom row, Figures~\ref{fig:disjunction_one_precision},~\ref{fig:disjunction_one_recall}, and~\ref{fig:disjunction_one_score}). In each figure, the leftmost group uses as many instances as does \shortcut, the middle uses as many as \stacked, the rightmost as many as \debuggingdecisiontrees. %
 }
  \label{fig:find_one}
\end{figure*}

\paragraph{Precision, Recall, and F-measure.} Figure~\ref{fig:find_one} shows the precision, recall, and F-measure for the  {\em FindOne} problem for 
the three types of definitive root causes. In the horizontal axis of each plot, we group all debugging methods by the maximum number of instances they were allowed to use to derive explanations, i.e., the number of new instances it took \shortcut, \stacked with four shortcuts, and \debuggingdecisiontrees to solve the problem.

\ourapproach's algorithms outperform Data X-Ray and Explanation Tables in all three scenarios, both when the baselines use instances generated by \ourapproach and SMAC.  If the definitive root cause is a single parameter-comparator-value (Figures~\ref{fig:single_one_precision},~\ref{fig:single_one_recall}, and~\ref{fig:single_one_score}), \shortcut and \stacked achieve similar precision and recall to \debuggingdecisiontrees, which dominates the other scenarios. %

Since we look for individual parameter-comparator-value triples with \shortcut and disjoint patterns in the data with decision trees, the likelihood that \shortcut does not find a definitive answer is higher in the scenario where a definitive root cause is a conjunction of factors, as can be seen in the relatively lower recall in Figure~\ref{fig:conjunction_one_recall}. Conjunctions that are composed of equalities and inequalities have a high probability of presenting configurations with the union property. Hence the \shortcut and \stacked algorithms generate more truncated assertions, and their precision score is lower in Figure~\ref{fig:conjunction_one_precision} as compared to Figures~\ref{fig:single_one_precision} and~\ref{fig:disjunction_one_precision}. However, the shortcut algorithms still give better performance than the state-of-the-art algorithms.

Also note that in most cases, the state-of-the-art methods using instances generated by \ourapproach outperform those methods using the SMAC instances. This suggests that our approach  effectively proposes more useful test cases.

\begin{figure*}[t]
  \centering
  \includegraphics[width=\textwidth]{imgs/legends.png}
  \centering
  \subcaptionbox{Precision \label{fig:disjunction_all_precision}}{
    \includegraphics[width=0.67\columnwidth]{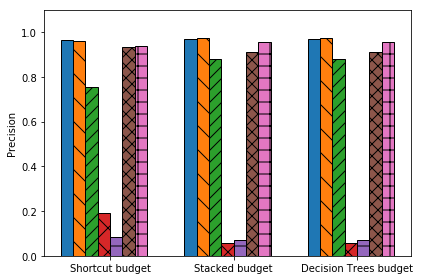}
  }
  \subcaptionbox{Recall
  \label{fig:disjunction_all_recall}}{
    \includegraphics[width=0.67\columnwidth]{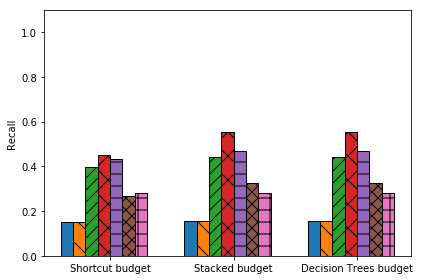}
  }
  \subcaptionbox{F-measure\label{fig:disjunction_all_score}}{
    \includegraphics[width=0.67\columnwidth]{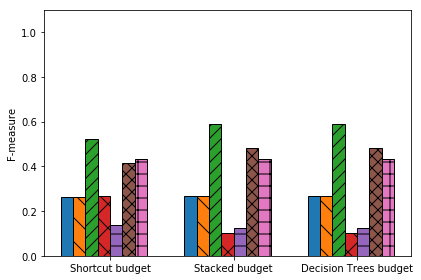}
  }
\vspace{-.3cm}
   \caption{Synthetic Pipelines. Metrics for the {\em FindAll} problem when the root cause is a disjunction of conjunctions (Figures~\ref{fig:disjunction_all_precision},~\ref{fig:disjunction_all_recall}, and~\ref{fig:disjunction_all_score}). In each sub-figure, the leftmost group uses as many instances as does \shortcut, the middle group as many \stacked, the rightmost as many as \debuggingdecisiontrees. %
 }
  \label{fig:find_all}
\end{figure*}

\begin{figure*}[ht]
  \centering
  \includegraphics[width=\textwidth]{imgs/legends.png}
  \centering
  \subcaptionbox{Parameters per asserted root cause \label{fig:avg_params}}{
    \includegraphics[width=0.8\columnwidth]{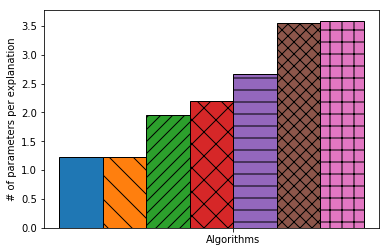}
  } \ \ \ \ 
  \subcaptionbox{Log of the number of asserted root causes per actual definitive root cause
  \label{fig:avg_answers}}{
    \includegraphics[width=0.8\columnwidth]{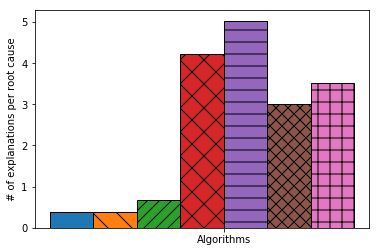}
  }
 \vspace{-.3cm}
   \caption{Synthetic Pipelines. (a) Average number of parameters per asserted root causes for each algorithm and (b) average logarithmic number of asserted root causes per actual definitive root cause for each method. 
 }
\label{fig:minimality}
\end{figure*}

Similar relative results hold for the {\em FindAll} problem Figure~\ref{fig:find_all} shows, although we observe the expected decrease in recall in Figure~\ref{fig:disjunction_all_recall}, as a single root cause is no longer sufficient. %
The non-minimal approach of Data X-Ray pays off in this scenario with multiple reasons for a pipeline to \fail.
However, \debuggingdecisiontrees presents a better trade-off between precision and recall (Figure~\ref{fig:disjunction_all_score}).  

\paragraph{Discussion.} 
The answers provided by Explanation Tables represent a prediction of the pipeline instance evaluation result expressed as a real number, were $1.0$ corresponds to a root cause. %
The precision of Explanation Tables is always high, but the recall is usually low. 
The converse happens with Data X-Ray, whose precision is low, but the recall is high. 
The reason for this is that Data X-Ray provides explanations that are not minimal definitive root causes. Further, neither Data X-Ray nor Explanation Tables support negation and inequality. 

Because both Data X-Ray and Explanation Tables achieved higher performance  when using the instances generated by \ourapproach than when using the instances generated by SMAC, we omit the SMAC configurations from the case studies with real-world pipelines presented later in this section. %

The takeaway message from the experiments is that \ourapproach dominates the other methods based on F-measure in every case, with \debuggingdecisiontrees dominating the shortcut methods unless the budget is small.

\paragraph{Conciseness of Explanation.} Figure~\ref{fig:minimality} shows that \ourapproach's algorithms not only provide explanations that are more concise in the number of parameters than Data X-Ray and Explanation Tables (Figure~\ref{fig:avg_params}) but also that it does not assert more root causes than there are (Figure~\ref{fig:avg_answers}). 

\subsection{Scalability}\label{sec:scalability}

\begin{figure}[b]
    \begin{center}
	    \includegraphics[width=0.87\columnwidth]{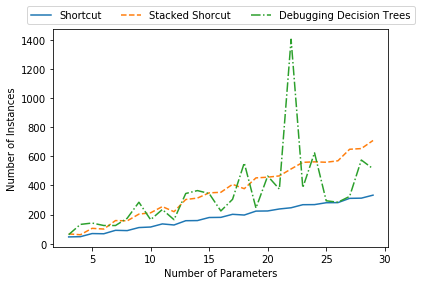}
	\end{center}
	\vspace{-.3cm}
	\caption[]{Instances required to execute each algorithm as a function of the number of parameters. %
	}
	\label{fig:scalingup}
	
\end{figure}

The primary computational cost for all algorithms we consider is the cost of running the pipeline instances. Figure~\ref{fig:scalingup} shows the number of instances created by each of \ourapproach's algorithms as a function of the number of parameters of the pipeline. \shortcut and \stacked increase linearly as expected. 
Because the time performance of \debuggingdecisiontrees has no simple relationship with root causes and could be exponential with the number of parameters, the user should choose  \shortcut or \stacked if there are many parameters and instances are expensive to run.

As noted above, the pipeline instances to test  can be run in parallel, but at some risk to unnecessary computation. To evaluate scalability, we re-execute the experiment with synthetic data, described in Section~\ref{sec:synthetic}, with different numbers of parallel computational cores and checked how many instances each core processed.
As Figure~\ref{fig:parallelism} shows, the scale-up is essentially linear with the number of cores for the \debuggingdecisiontrees algorithm solving the {\em FindAll} problem. Thus given sufficient computing power, even \debuggingdecisiontrees can explore relatively large parameter spaces.

\begin{figure}[t]
    \begin{center}
	    \includegraphics[width=0.87\columnwidth]{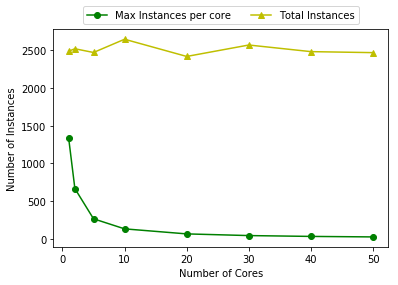}
	\end{center}
	\vspace{-.3cm}
	\caption[]{Scalability of \ourapproach when	running the \debuggingdecisiontrees algorithm on multiple cores. }
	\label{fig:parallelism}
	
\end{figure}

\subsection{Real-World Pipelines}\label{sec:real}

\paragraph{Data Polygamy Framework.}

Data Polygamy aims to discover statistically significant relationships in a large number of spatio-temporal datasets~\cite{Chirigati:2016:DPM:2882903.2915245}.
We created a VisTrails~\cite{Freire2011} pipeline that reproduces  an experiment designed by the Data Polygamy authors to evaluate the p-value and false discovery rate for their approach under different scenarios. Specifically, the pipeline evaluates different methods for determining statistical significance.
The datasets used are synthetically generated, and their features are given as input parameters for the experiment. This process is a good use case for our approach because it has the following properties:

\begin{myitemize}
    \item The experiment requires a complex pipeline, including steps for data cleaning, data transformation, feature identification, multiple hypotheses testing, and other activities.
    \item The input data is heterogeneous -- over 300 datasets at different spatio-temporal resolutions.
    \item The parameter space is large, consisting of 2 boolean, 3 categorical (3 to 10 possible values), and 7 numerical parameters. Each instance takes 20 minutes to run, making manual debugging impractical.
\end{myitemize}

For this experiment, we selected different data types and steps of the computational pipeline. %
Each parameter can conceivably take on any value belonging to its type (e.g., Integer or Boolean). Given a set of pipeline instances, some of which crash and some of which execute to completion, we want to find at least one minimal set of parameter-values or combinations of parameter-values among those in the given pipeline instances, which cause the execution to crash. %

\paragraph{GAN Training.}
Generative adversarial networks (GAN)~\cite{goodfellow2014generative} are widely applied to image generation and semi-supervised learning~\cite{radford2015unsupervised,Zhang_2017}. Training these generative models involves an expensive computational process with several configuration parameters, such as the architecture choices and a high-dimensional hyperparameter space to tune.  Sequence model-based approaches like Bayesian Optimization are prohibitively expensive in practice, since a single configuration could take more than a week to train. The most extensive study on the pathology of GAN training~\cite{brock2018large} entailed modifying baseline architectures and setting hyperparameters manually over three months, using hundreds of cores of a Google TPUv3 Pod~\cite{Jouppi:2017:IPA:3140659.3080246}. \citet{lucic2017gans} evaluated seven different GAN architectures and their hyperparameter configurations, performing a random search in an experimental setting that would take approximately 6.85 years using a single NVIDIA P100.

We created a computational pipeline that trains a modified SAGAN~\cite{zhang2018selfattention} on CIFAR-10~\cite{krizhevsky2009learning} and applied \ourapproach to find root causes of one of the most common problems of GAN training: \emph{mode collapse}~\cite{che2016mode}. Our evaluation function sets a threshold on the Frechet Inception Distance (FID)~\cite{heusel2017gans} metric, which is a proxy for mode collapse. This pipeline specified only 6 parameters limited to 5 possible values. The bottleneck was the execution time because each configuration is trained in approximately 10 hours, depending on the discriminator and generator learning rates and the number of steps.    

\paragraph{Transactional Database Performance.}
DBSherlock~\cite{Yoon:2016:DPD:2882903.2915218} is a tool designed to help database administrators diagnose online transaction processing (OLTP) performance problems. DBSherlock analyzes hundreds of statistics and configurations from OLTP logs and tries to identify which subsets of that data are potential root causes of the problems. In their experiments, the authors ran different settings of the TPC-C benchmark~\cite{TPCC}, introducing 10 distinct classes of performance anomalies varying the duration of the abnormal behavior.  For each type of anomaly, they collected the workload logs, creating a dataset of logs, each labeled as normal or anomalous.      

This dataset was used by~\citet{Bailis:2017:MPA:3035918.3035928} to demonstrate Macrobase's ability to distinguish abnormal behavior in OLTP servers, where a classifier was trained to identify servers presenting degradation in performance. 

We ran \ourapproach on this data to identify the root causes of each class of performance anomaly. This experiment poses two additional challenges.
The first challenge comes from the fact that, for this example,  it is not possible to derive and run additional instances.
We simulated the creation of new instances by reading only part of provenance and testing the algorithms on unread data, with an early stop when the pipeline instance to be tested was not present.

The second challenge was the number of properties -- a total of 202 numerical statistics. We applied feature selection and aggregated the values in buckets in order to increase the probability of  configurations that share parameter-value combinations. This reduced the configuration space to 15 parameters with 8 possible values (buckets) each. 
Since we were dealing with historical data, the instance execution time here is negligible.

We split the dataset into three parts: 50\% of the data was used for training; %
25\% was the budget for pipeline instances that any sub-method of \ourapproach requested;  and we create a 25\% holdout to assess the accuracy of \ourapproach's  minimal root causes as a classifier to predict when a pipeline instance will fail. Precisely, if the pipeline instance is a superset of a minimal root cause, we predict failure. This method is accurate 98\% of the time, results that are comparable to those reported in~\cite{Bailis:2017:MPA:3035918.3035928}. Thus, \ourapproach achieves concise explanations of the bugs and high classification accuracy.

\paragraph{Quality Measures.}
The root causes identified for all aforementioned pipelines were manually investigated to assess their soundness and to create ground truth for the real-world data. The ground truth allowed us to compute precision and recall and to compare with Data X-Ray and Explanation Tables. 
\begin{figure}[ht]
    \begin{center}
	    \includegraphics[width=0.87\columnwidth ]{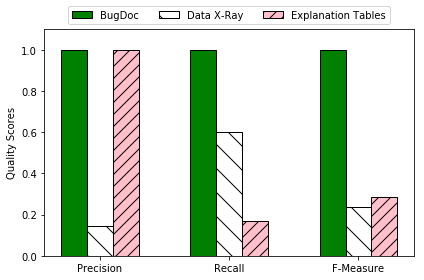}
	\end{center}
	\vspace{-.3cm}
	\caption[]{Real-World pipelines. \ourapproach (using \stacked and \debuggingdecisiontrees combined), outperforms Data X-Ray and Explanation tables.
	}
	\label{fig:polygamy_plot}
\end{figure}

\paragraph{Empirical Results.}
The recall metric in Figure~\ref{fig:polygamy_plot} shows that \ourapproach methods found all the parameter-comparator-value triples that would cause the execution of the pipelines to fail. As in Section~\ref{sec:synthetic}, Data X-Ray sometimes produces spurious root causes, %
yielding lower precision. By contrast, Explanation Tables shows high precision, but low recall.

\section{Conclusion}\label{sec:conclusion}

To the best of our knowledge, \ourapproach is the first method that autonomously finds minimal definitive root causes in computational pipelines or workflows. \ourapproach achieves this by analyzing previously executed computational pipeline instances, selectively executing new pipeline instances, and finding minimal explanations. 

When each root cause is due to a  single parameter-value setting or a single conjunction of parameter-equality-values, the shortcut methods of \ourapproach can provably guarantee to find at least one root cause in time proportional to the number of parameters (rather than exponential in the number of parameters as required by exhaustive search). Further, the shortcut approaches are guaranteed to find at least a subset of the parameter-values constituting a root cause in time linear in the number of parameters. When there are few parameters or sufficient 
computation time, the \debuggingdecisiontrees method of \ourapproach performs best.

Compared to the state of the art, \ourapproach makes no statistical assumptions (as do Bayesian optimization approaches like SMAC), but generally achieves better precision and recall given the same number of pipeline instances. %
In all cases, \ourapproach dominates the other methods based on the F-measure, though it may sometimes lose based on precision or recall individually. %
\ourapproach parallelizes well: pipeline instances can be executed in parallel, thus opening up the possibility of exploring large parameter spaces.

There are two main avenues we plan to pursue in future work. First, we would like to make \ourapproach available on a wide variety of provenance systems that support pipeline execution to broaden its applicability. 
Second, we would like to explore group testing~\cite{Lee2015,Macula2004} to identify problematic data elements when a dataset has been identified as a root cause. 
Another potential direction is the inclusion of observed variables (or predicates), properties that cannot be manipulated. While these cannot be used for deriving new instances, they can help enrich the explanations.

\paragraph{Acknowledgments.} We thank Data X-Ray and Explanation Tables authors for sharing their code with us. We are also grateful to Fernando Chirigati, Neel Dey, and Peter Bailis for providing the real-world pipelines. This work has been supported in part by NSF grants MCB-1158273, IOS-1339362, and MCB-1412232, CNPq (Brazil) grant 209623/2014-4, the DARPA D3M program, and NYU WIRELESS. Any opinions, findings, and conclusions or recommendations expressed in this material are those of the authors and do not necessarily reflect the views of funding agencies.

\bibliographystyle{ACM-Reference-Format}
\bibliography{paper}


\begin{thebibliography}{51}


\ifx \showCODEN    \undefined \def \showCODEN     #1{\unskip}     \fi
\ifx \showDOI      \undefined \def \showDOI       #1{#1}\fi
\ifx \showISBNx    \undefined \def \showISBNx     #1{\unskip}     \fi
\ifx \showISBNxiii \undefined \def \showISBNxiii  #1{\unskip}     \fi
\ifx \showISSN     \undefined \def \showISSN      #1{\unskip}     \fi
\ifx \showLCCN     \undefined \def \showLCCN      #1{\unskip}     \fi
\ifx \shownote     \undefined \def \shownote      #1{#1}          \fi
\ifx \showarticletitle \undefined \def \showarticletitle #1{#1}   \fi
\ifx \showURL      \undefined \def \showURL       {\relax}        \fi
\providecommand\bibfield[2]{#2}
\providecommand\bibinfo[2]{#2}
\providecommand\natexlab[1]{#1}
\providecommand\showeprint[2][]{arXiv:#2}

\bibitem[\protect\citeauthoryear{Alvaro, Rosen, and Hellerstein}{Alvaro
  et~al\mbox{.}}{2015}]%
        {DBLP:conf/sigmod/AlvaroRH15}
\bibfield{author}{\bibinfo{person}{Peter Alvaro}, \bibinfo{person}{Joshua
  Rosen}, {and} \bibinfo{person}{Joseph~M. Hellerstein}.}
  \bibinfo{year}{2015}\natexlab{}.
\newblock \showarticletitle{Lineage-driven Fault Injection}. In
  \bibinfo{booktitle}{\emph{Proceedings of {ACM} {SIGMOD}}}.
  \bibinfo{pages}{331--346}.
\newblock


\bibitem[\protect\citeauthoryear{Attariyan, Chow, and Flinn}{Attariyan
  et~al\mbox{.}}{2012}]%
        {Attariyan2012}
\bibfield{author}{\bibinfo{person}{Mona Attariyan}, \bibinfo{person}{Michael
  Chow}, {and} \bibinfo{person}{Jason Flinn}.} \bibinfo{year}{2012}\natexlab{}.
\newblock \showarticletitle{X-Ray: Automating Root-Cause Diagnosis of
  Performance Anomalies in Production Software}. In
  \bibinfo{booktitle}{\emph{Proceedings of USENIX OSDI}}.
  \bibinfo{pages}{307--320}.
\newblock
\showISBNx{9781931971966}


\bibitem[\protect\citeauthoryear{Attariyan and Flinn}{Attariyan and
  Flinn}{2011}]%
        {Attariyan2011}
\bibfield{author}{\bibinfo{person}{Mona Attariyan} {and} \bibinfo{person}{Jason
  Flinn}.} \bibinfo{year}{2011}\natexlab{}.
\newblock \showarticletitle{Automating Configuration Troubleshooting with
  ConfAid}.
\newblock \bibinfo{journal}{\emph{;login:}} \bibinfo{number}{1}
  (\bibinfo{year}{2011}), \bibinfo{pages}{1--14}.
\newblock


\bibitem[\protect\citeauthoryear{Bailis, Gan, Madden, Narayanan, Rong, and
  Suri}{Bailis et~al\mbox{.}}{2017}]%
        {Bailis:2017:MPA:3035918.3035928}
\bibfield{author}{\bibinfo{person}{Peter Bailis}, \bibinfo{person}{Edward Gan},
  \bibinfo{person}{Samuel Madden}, \bibinfo{person}{Deepak Narayanan},
  \bibinfo{person}{Kexin Rong}, {and} \bibinfo{person}{Sahaana Suri}.}
  \bibinfo{year}{2017}\natexlab{}.
\newblock \showarticletitle{MacroBase: Prioritizing Attention in Fast Data}. In
  \bibinfo{booktitle}{\emph{Proceedings of {ACM} {SIGMOD}}}.
  \bibinfo{pages}{541--556}.
\newblock
\showISBNx{978-1-4503-4197-4}


\bibitem[\protect\citeauthoryear{Bala and Chana}{Bala and Chana}{2015}]%
        {Bala:2015:IFP:2775763.2776365}
\bibfield{author}{\bibinfo{person}{Anju Bala} {and} \bibinfo{person}{Inderveer
  Chana}.} \bibinfo{year}{2015}\natexlab{}.
\newblock \showarticletitle{Intelligent Failure Prediction Models for
  Scientific Workflows}.
\newblock \bibinfo{journal}{\emph{Expert System Applications}}
  \bibinfo{number}{3} (\bibinfo{date}{Feb.} \bibinfo{year}{2015}),
  \bibinfo{pages}{980--989}.
\newblock
\showISSN{0957-4174}


\bibitem[\protect\citeauthoryear{Bergstra, Bardenet, Bengio, and
  K\'{e}gl}{Bergstra et~al\mbox{.}}{2011}]%
        {Bergstra2011}
\bibfield{author}{\bibinfo{person}{James Bergstra}, \bibinfo{person}{R\'{e}mi
  Bardenet}, \bibinfo{person}{Yoshua Bengio}, {and} \bibinfo{person}{Bal\'{a}zs
  K\'{e}gl}.} \bibinfo{year}{2011}\natexlab{}.
\newblock \showarticletitle{Algorithms for Hyper-Parameter Optimization}. In
  \bibinfo{booktitle}{\emph{Proceedings of NIPS}}. \bibinfo{pages}{2546--2554}.
\newblock
\showISBNx{9781618395993}


\bibitem[\protect\citeauthoryear{Bergstra and Bengio}{Bergstra and
  Bengio}{2012}]%
        {Bergstra:2012:RSH:2188385.2188395}
\bibfield{author}{\bibinfo{person}{James Bergstra} {and}
  \bibinfo{person}{Yoshua Bengio}.} \bibinfo{year}{2012}\natexlab{}.
\newblock \showarticletitle{Random Search for Hyper-parameter Optimization}.
\newblock \bibinfo{journal}{\emph{JMLR}} (\bibinfo{date}{Feb.}
  \bibinfo{year}{2012}), \bibinfo{pages}{281--305}.
\newblock


\bibitem[\protect\citeauthoryear{Bergstra, Yamins, and Cox}{Bergstra
  et~al\mbox{.}}{2013}]%
        {Bergstra:2013:MSM:3042817.3042832}
\bibfield{author}{\bibinfo{person}{J. Bergstra}, \bibinfo{person}{D. Yamins},
  {and} \bibinfo{person}{D.~D. Cox}.} \bibinfo{year}{2013}\natexlab{}.
\newblock \showarticletitle{Making a Science of Model Search: Hyperparameter
  Optimization in Hundreds of Dimensions for Vision Architectures}. In
  \bibinfo{booktitle}{\emph{Proceedings of ICML}}. \bibinfo{pages}{115--123}.
\newblock


\bibitem[\protect\citeauthoryear{Bleifu\ss, Kruse, and Naumann}{Bleifu\ss
  et~al\mbox{.}}{2017}]%
        {BleifuB2017}
\bibfield{author}{\bibinfo{person}{Tobias Bleifu\ss},
  \bibinfo{person}{Sebastian Kruse}, {and} \bibinfo{person}{Felix Naumann}.}
  \bibinfo{year}{2017}\natexlab{}.
\newblock \showarticletitle{Efficient Denial Constraint Discovery with Hydra}.
\newblock \bibinfo{journal}{\emph{Proceedings of VLDB Endowment}}
  \bibinfo{number}{3} (\bibinfo{year}{2017}), \bibinfo{pages}{311--323}.
\newblock


\bibitem[\protect\citeauthoryear{Brock, Donahue, and Simonyan}{Brock
  et~al\mbox{.}}{2018}]%
        {brock2018large}
\bibfield{author}{\bibinfo{person}{Andrew Brock}, \bibinfo{person}{Jeff
  Donahue}, {and} \bibinfo{person}{Karen Simonyan}.}
  \bibinfo{year}{2018}\natexlab{}.
\newblock \bibinfo{title}{Large Scale GAN Training for High Fidelity Natural
  Image Synthesis}.
\newblock , \bibinfo{numpages}{35}~pages.
\newblock
\showeprint[arxiv]{1809.11096}


\bibitem[\protect\citeauthoryear{Che, Li, Jacob, Bengio, and Li}{Che
  et~al\mbox{.}}{2016}]%
        {che2016mode}
\bibfield{author}{\bibinfo{person}{Tong Che}, \bibinfo{person}{Yanran Li},
  \bibinfo{person}{Athul~Paul Jacob}, \bibinfo{person}{Yoshua Bengio}, {and}
  \bibinfo{person}{Wenjie Li}.} \bibinfo{year}{2016}\natexlab{}.
\newblock \bibinfo{title}{Mode Regularized Generative Adversarial Networks}.
\newblock , \bibinfo{numpages}{13}~pages.
\newblock
\showeprint[arxiv]{1612.02136}


\bibitem[\protect\citeauthoryear{Chen, Wu, Haeberlen, Loo, and Zhou}{Chen
  et~al\mbox{.}}{2017}]%
        {Chen2017}
\bibfield{author}{\bibinfo{person}{Ang Chen}, \bibinfo{person}{Yang Wu},
  \bibinfo{person}{Andreas Haeberlen}, \bibinfo{person}{Boon~T. Loo}, {and}
  \bibinfo{person}{Wenchao Zhou}.} \bibinfo{year}{2017}\natexlab{}.
\newblock \showarticletitle{Data Provenance at Internet Scale: Architecture ,
  Experiences, and the Road Ahead}. In \bibinfo{booktitle}{\emph{Proceedings of
  CIDR}}. \bibinfo{pages}{1--7}.
\newblock


\bibitem[\protect\citeauthoryear{Chen, Kiciman, Fratkin, Fox, and Brewer}{Chen
  et~al\mbox{.}}{2002}]%
        {chen2002pinpoint}
\bibfield{author}{\bibinfo{person}{Mike~Y. Chen}, \bibinfo{person}{Emre
  Kiciman}, \bibinfo{person}{Eugene Fratkin}, \bibinfo{person}{Armando Fox},
  {and} \bibinfo{person}{Eric Brewer}.} \bibinfo{year}{2002}\natexlab{}.
\newblock \showarticletitle{Pinpoint: Problem Determination in Large, Dynamic
  Internet Services}. In \bibinfo{booktitle}{\emph{Proceedings of IEEE DSN}}.
  \bibinfo{pages}{595--604}.
\newblock


\bibitem[\protect\citeauthoryear{Chirigati, Doraiswamy, Damoulas, and
  Freire}{Chirigati et~al\mbox{.}}{2016}]%
        {Chirigati:2016:DPM:2882903.2915245}
\bibfield{author}{\bibinfo{person}{Fernando Chirigati}, \bibinfo{person}{Harish
  Doraiswamy}, \bibinfo{person}{Theodoros Damoulas}, {and}
  \bibinfo{person}{Juliana Freire}.} \bibinfo{year}{2016}\natexlab{}.
\newblock \showarticletitle{Data Polygamy: The Many-Many Relationships Among
  Urban Spatio-Temporal Data Sets}. In \bibinfo{booktitle}{\emph{Proceedings of
  ACM SIGMOD}}. \bibinfo{pages}{1011--1025}.
\newblock


\bibitem[\protect\citeauthoryear{Chu, Ilyas, and Papotti}{Chu
  et~al\mbox{.}}{2013}]%
        {Chu:2013:DDC:2536258.2536262}
\bibfield{author}{\bibinfo{person}{Xu Chu}, \bibinfo{person}{Ihab~F. Ilyas},
  {and} \bibinfo{person}{Paolo Papotti}.} \bibinfo{year}{2013}\natexlab{}.
\newblock \showarticletitle{Discovering Denial Constraints}.
\newblock \bibinfo{journal}{\emph{Proceeding of VLDB Endowment}}
  \bibinfo{number}{13} (\bibinfo{date}{Aug.} \bibinfo{year}{2013}),
  \bibinfo{pages}{1498--1509}.
\newblock
\showISSN{2150-8097}


\bibitem[\protect\citeauthoryear{Dalibard, Schaarschmidt, and Yoneki}{Dalibard
  et~al\mbox{.}}{2017}]%
        {Dalibard:2017:BBA:3038912.3052662}
\bibfield{author}{\bibinfo{person}{Valentin Dalibard}, \bibinfo{person}{Michael
  Schaarschmidt}, {and} \bibinfo{person}{Eiko Yoneki}.}
  \bibinfo{year}{2017}\natexlab{}.
\newblock \showarticletitle{BOAT: Building Auto-Tuners with Structured Bayesian
  Optimization}. In \bibinfo{booktitle}{\emph{Proceedings of WWW}}.
  \bibinfo{pages}{479--488}.
\newblock
\showISBNx{978-1-4503-4913-0}


\bibitem[\protect\citeauthoryear{Dolatnia, Fern, and Fern}{Dolatnia
  et~al\mbox{.}}{2016}]%
        {Dolatnia2016}
\bibfield{author}{\bibinfo{person}{Nima Dolatnia}, \bibinfo{person}{Alan Fern},
  {and} \bibinfo{person}{Xiaoli Fern}.} \bibinfo{year}{2016}\natexlab{}.
\newblock \showarticletitle{{Bayesian Optimization with Resource Constraints
  and Production}}. In \bibinfo{booktitle}{\emph{Proceedings of ICAPS}}.
  \bibinfo{pages}{115--123}.
\newblock
\showISSN{23340843}


\bibitem[\protect\citeauthoryear{El~Gebaly, Agrawal, Golab, Korn, and
  Srivastava}{El~Gebaly et~al\mbox{.}}{2014}]%
        {GebalyFGKS14}
\bibfield{author}{\bibinfo{person}{Kareem El~Gebaly}, \bibinfo{person}{Parag
  Agrawal}, \bibinfo{person}{Lukasz Golab}, \bibinfo{person}{Flip Korn}, {and}
  \bibinfo{person}{Divesh Srivastava}.} \bibinfo{year}{2014}\natexlab{}.
\newblock \showarticletitle{Interpretable and Informative Explanations of
  Outcomes}.
\newblock \bibinfo{journal}{\emph{Proceedings of VLDB Endowment}}
  \bibinfo{number}{1} (\bibinfo{date}{Sept.} \bibinfo{year}{2014}),
  \bibinfo{pages}{61--72}.
\newblock
\showISSN{2150-8097}


\bibitem[\protect\citeauthoryear{Fraser and Arcuri}{Fraser and Arcuri}{2013}]%
        {fraser@tse2013}
\bibfield{author}{\bibinfo{person}{Gordon Fraser} {and} \bibinfo{person}{Andrea
  Arcuri}.} \bibinfo{year}{2013}\natexlab{}.
\newblock \showarticletitle{Whole test suite generation}.
\newblock \bibinfo{journal}{\emph{IEEE Transactions on Software Engineering}}
  \bibinfo{number}{2} (\bibinfo{year}{2013}), \bibinfo{pages}{276--291}.
\newblock


\bibitem[\protect\citeauthoryear{Freire, Koop, Santos, Scheidegger, Silva, and
  Vo}{Freire et~al\mbox{.}}{2011}]%
        {Freire2011}
\bibfield{author}{\bibinfo{person}{Juliana Freire}, \bibinfo{person}{David
  Koop}, \bibinfo{person}{Emanuele Santos}, \bibinfo{person}{Carlos
  Scheidegger}, \bibinfo{person}{Cl\'{a}udio~T. Silva}, {and}
  \bibinfo{person}{H.~T. Vo}.} \bibinfo{year}{2011}\natexlab{}.
\newblock \showarticletitle{{The Architecture of Open Source Applications -
  Chapter 23. VisTrails}}.
\newblock \bibinfo{journal}{\emph{Computer}} (\bibinfo{year}{2011}),
  \bibinfo{pages}{367--386}.
\newblock


\bibitem[\protect\citeauthoryear{Galhotra, Brun, and Meliou}{Galhotra
  et~al\mbox{.}}{2017}]%
        {Galhotra2017}
\bibfield{author}{\bibinfo{person}{Sainyam Galhotra}, \bibinfo{person}{Yuriy
  Brun}, {and} \bibinfo{person}{Alexandra Meliou}.}
  \bibinfo{year}{2017}\natexlab{}.
\newblock \showarticletitle{Fairness Testing: Testing Software for
  Discrimination}.
\newblock \bibinfo{journal}{\emph{CoRR}} (\bibinfo{year}{2017}),
  \bibinfo{pages}{1--13}.
\newblock
\showeprint[arxiv]{1709.03221}


\bibitem[\protect\citeauthoryear{Godefroid, Levin, and Molnar}{Godefroid
  et~al\mbox{.}}{2008}]%
        {godefroid@ndss2008}
\bibfield{author}{\bibinfo{person}{Patrice Godefroid},
  \bibinfo{person}{Michael~Y. Levin}, {and} \bibinfo{person}{David~A. Molnar}.}
  \bibinfo{year}{2008}\natexlab{}.
\newblock \showarticletitle{Automated whitebox fuzz testing}. In
  \bibinfo{booktitle}{\emph{Proceedings of NDSS}}. \bibinfo{pages}{151–166}.
\newblock


\bibitem[\protect\citeauthoryear{Goodfellow, Pouget-Abadie, Mirza, Xu,
  Warde-Farley, Ozair, Courville, and Bengio}{Goodfellow et~al\mbox{.}}{2014}]%
        {goodfellow2014generative}
\bibfield{author}{\bibinfo{person}{Ian~J. Goodfellow}, \bibinfo{person}{Jean
  Pouget-Abadie}, \bibinfo{person}{Mehdi Mirza}, \bibinfo{person}{Bing Xu},
  \bibinfo{person}{David Warde-Farley}, \bibinfo{person}{Sherjil Ozair},
  \bibinfo{person}{Aaron Courville}, {and} \bibinfo{person}{Yoshua Bengio}.}
  \bibinfo{year}{2014}\natexlab{}.
\newblock \bibinfo{title}{Generative Adversarial Networks}.
\newblock , \bibinfo{numpages}{9}~pages.
\newblock
\showeprint[arxiv]{1406.2661}


\bibitem[\protect\citeauthoryear{Gudmundsdottir, Salimi, Balazinska, Ports, and
  Suciu}{Gudmundsdottir et~al\mbox{.}}{2017}]%
        {Gudmundsdottir2017}
\bibfield{author}{\bibinfo{person}{Helga Gudmundsdottir},
  \bibinfo{person}{Babak Salimi}, \bibinfo{person}{Magdalena Balazinska},
  \bibinfo{person}{Dan~R.K. Ports}, {and} \bibinfo{person}{Dan Suciu}.}
  \bibinfo{year}{2017}\natexlab{}.
\newblock \showarticletitle{A Demonstration of Interactive Analysis of
  Performance Measurements with Viska}. In
  \bibinfo{booktitle}{\emph{Proceedings of ACM SIGMOD}}.
  \bibinfo{pages}{1707--1710}.
\newblock
\showISBNx{978-1-4503-4197-4}


\bibitem[\protect\citeauthoryear{Gulzar, Wang, and Kim}{Gulzar
  et~al\mbox{.}}{2018}]%
        {Gulzar2018}
\bibfield{author}{\bibinfo{person}{Muhammad~Ali Gulzar}, \bibinfo{person}{Siman
  Wang}, {and} \bibinfo{person}{Miryung Kim}.} \bibinfo{year}{2018}\natexlab{}.
\newblock \showarticletitle{BigSift: Automated Debugging of Big Data Analytics
  in Data-Intensive Scalable Computing}. In
  \bibinfo{booktitle}{\emph{Proceedings of ESEC/FSE}}.
  \bibinfo{pages}{863--866}.
\newblock
\showISBNx{9781450355735}


\bibitem[\protect\citeauthoryear{Heusel, Ramsauer, Unterthiner, Nessler, and
  Hochreiter}{Heusel et~al\mbox{.}}{2017}]%
        {heusel2017gans}
\bibfield{author}{\bibinfo{person}{Martin Heusel}, \bibinfo{person}{Hubert
  Ramsauer}, \bibinfo{person}{Thomas Unterthiner}, \bibinfo{person}{Bernhard
  Nessler}, {and} \bibinfo{person}{Sepp Hochreiter}.}
  \bibinfo{year}{2017}\natexlab{}.
\newblock \bibinfo{title}{GANs Trained by a Two Time-Scale Update Rule Converge
  to a Local Nash Equilibrium}.
\newblock , \bibinfo{numpages}{38}~pages.
\newblock
\showeprint[arxiv]{1706.08500}


\bibitem[\protect\citeauthoryear{Holler, Herzig, and Zeller}{Holler
  et~al\mbox{.}}{2012}]%
        {holler@uss2012}
\bibfield{author}{\bibinfo{person}{Christian Holler}, \bibinfo{person}{Kim
  Herzig}, {and} \bibinfo{person}{Andreas Zeller}.}
  \bibinfo{year}{2012}\natexlab{}.
\newblock \showarticletitle{Fuzzing with code fragments}. In
  \bibinfo{booktitle}{\emph{Proceedings of USENIX Security Symposium}}.
  \bibinfo{pages}{445--458}.
\newblock


\bibitem[\protect\citeauthoryear{Huang}{Huang}{2014}]%
        {DBLP:journals/corr/Huang14c}
\bibfield{author}{\bibinfo{person}{Jiangbo Huang}.}
  \bibinfo{year}{2014}\natexlab{}.
\newblock \showarticletitle{Programing implementation of the Quine-McCluskey
  method for minimization of Boolean expression}.
\newblock \bibinfo{journal}{\emph{CoRR}} (\bibinfo{year}{2014}),
  \bibinfo{pages}{1--22}.
\newblock
\showeprint[arxiv]{1410.1059}


\bibitem[\protect\citeauthoryear{Hutter, Hoos, and Leyton-Brown}{Hutter
  et~al\mbox{.}}{2011}]%
        {HutHooLey11-smac}
\bibfield{author}{\bibinfo{person}{F. Hutter}, \bibinfo{person}{H.~H. Hoos},
  {and} \bibinfo{person}{K. Leyton-Brown}.} \bibinfo{year}{2011}\natexlab{}.
\newblock \showarticletitle{Sequential Model-Based Optimization for General
  Algorithm Configuration}. In \bibinfo{booktitle}{\emph{Proceedings of
  LION-5}}. \bibinfo{pages}{507--523}.
\newblock


\bibitem[\protect\citeauthoryear{Johnson, Brun, and Meliou}{Johnson
  et~al\mbox{.}}{2018}]%
        {Johnson2018}
\bibfield{author}{\bibinfo{person}{Brittany Johnson}, \bibinfo{person}{Yuriy
  Brun}, {and} \bibinfo{person}{Alexandra Meliou}.}
  \bibinfo{year}{2018}\natexlab{}.
\newblock \showarticletitle{Causal Testing: Finding Defects' Root Causes}.
\newblock \bibinfo{journal}{\emph{CoRR}} (\bibinfo{year}{2018}),
  \bibinfo{pages}{1--12}.
\newblock
\showeprint[arxiv]{1809.06991}


\bibitem[\protect\citeauthoryear{Jouppi, Young, Patil, Patterson, Agrawal,
  Bajwa, Bates, Bhatia, Boden, Borchers, Boyle, Cantin, Chao, Clark, Coriell,
  Daley, Dau, Dean, Gelb, Ghaemmaghami, Gottipati, Gulland, Hagmann, Ho,
  Hogberg, Hu, Hundt, Hurt, Ibarz, Jaffey, Jaworski, Kaplan, Khaitan,
  Killebrew, Koch, Kumar, Lacy, Laudon, Law, Le, Leary, Liu, Lucke, Lundin,
  MacKean, Maggiore, Mahony, Miller, Nagarajan, Narayanaswami, Ni, Nix, Norrie,
  Omernick, Penukonda, Phelps, Ross, Ross, Salek, Samadiani, Severn, Sizikov,
  Snelham, Souter, Steinberg, Swing, Tan, Thorson, Tian, Toma, Tuttle,
  Vasudevan, Walter, Wang, Wilcox, and Yoon}{Jouppi et~al\mbox{.}}{2017}]%
        {Jouppi:2017:IPA:3140659.3080246}
\bibfield{author}{\bibinfo{person}{Norman~P. Jouppi}, \bibinfo{person}{Cliff
  Young}, \bibinfo{person}{Nishant Patil}, \bibinfo{person}{David Patterson},
  \bibinfo{person}{Gaurav Agrawal}, \bibinfo{person}{Raminder Bajwa},
  \bibinfo{person}{Sarah Bates}, \bibinfo{person}{Suresh Bhatia},
  \bibinfo{person}{Nan Boden}, \bibinfo{person}{Al Borchers},
  \bibinfo{person}{Rick Boyle}, \bibinfo{person}{Pierre-luc Cantin},
  \bibinfo{person}{Clifford Chao}, \bibinfo{person}{Chris Clark},
  \bibinfo{person}{Jeremy Coriell}, \bibinfo{person}{Mike Daley},
  \bibinfo{person}{Matt Dau}, \bibinfo{person}{Jeffrey Dean},
  \bibinfo{person}{Ben Gelb}, \bibinfo{person}{Tara~Vazir Ghaemmaghami},
  \bibinfo{person}{Rajendra Gottipati}, \bibinfo{person}{William Gulland},
  \bibinfo{person}{Robert Hagmann}, \bibinfo{person}{C.~Richard Ho},
  \bibinfo{person}{Doug Hogberg}, \bibinfo{person}{John Hu},
  \bibinfo{person}{Robert Hundt}, \bibinfo{person}{Dan Hurt},
  \bibinfo{person}{Julian Ibarz}, \bibinfo{person}{Aaron Jaffey},
  \bibinfo{person}{Alek Jaworski}, \bibinfo{person}{Alexander Kaplan},
  \bibinfo{person}{Harshit Khaitan}, \bibinfo{person}{Daniel Killebrew},
  \bibinfo{person}{Andy Koch}, \bibinfo{person}{Naveen Kumar},
  \bibinfo{person}{Steve Lacy}, \bibinfo{person}{James Laudon},
  \bibinfo{person}{James Law}, \bibinfo{person}{Diemthu Le},
  \bibinfo{person}{Chris Leary}, \bibinfo{person}{Zhuyuan Liu},
  \bibinfo{person}{Kyle Lucke}, \bibinfo{person}{Alan Lundin},
  \bibinfo{person}{Gordon MacKean}, \bibinfo{person}{Adriana Maggiore},
  \bibinfo{person}{Maire Mahony}, \bibinfo{person}{Kieran Miller},
  \bibinfo{person}{Rahul Nagarajan}, \bibinfo{person}{Ravi Narayanaswami},
  \bibinfo{person}{Ray Ni}, \bibinfo{person}{Kathy Nix},
  \bibinfo{person}{Thomas Norrie}, \bibinfo{person}{Mark Omernick},
  \bibinfo{person}{Narayana Penukonda}, \bibinfo{person}{Andy Phelps},
  \bibinfo{person}{Jonathan Ross}, \bibinfo{person}{Matt Ross},
  \bibinfo{person}{Amir Salek}, \bibinfo{person}{Emad Samadiani},
  \bibinfo{person}{Chris Severn}, \bibinfo{person}{Gregory Sizikov},
  \bibinfo{person}{Matthew Snelham}, \bibinfo{person}{Jed Souter},
  \bibinfo{person}{Dan Steinberg}, \bibinfo{person}{Andy Swing},
  \bibinfo{person}{Mercedes Tan}, \bibinfo{person}{Gregory Thorson},
  \bibinfo{person}{Bo Tian}, \bibinfo{person}{Horia Toma},
  \bibinfo{person}{Erick Tuttle}, \bibinfo{person}{Vijay Vasudevan},
  \bibinfo{person}{Richard Walter}, \bibinfo{person}{Walter Wang},
  \bibinfo{person}{Eric Wilcox}, {and} \bibinfo{person}{Doe~Hyun Yoon}.}
  \bibinfo{year}{2017}\natexlab{}.
\newblock \showarticletitle{In-Datacenter Performance Analysis of a Tensor
  Processing Unit}.
\newblock \bibinfo{journal}{\emph{SIGARCH Computer Architecture News}}
  \bibinfo{number}{2} (\bibinfo{date}{June} \bibinfo{year}{2017}),
  \bibinfo{pages}{1--12}.
\newblock
\showISSN{0163-5964}


\bibitem[\protect\citeauthoryear{Krizhevsky et~al\mbox{.}}{Krizhevsky
  et~al\mbox{.}}{2009}]%
        {krizhevsky2009learning}
\bibfield{author}{\bibinfo{person}{Alex Krizhevsky} {et~al\mbox{.}}}
  \bibinfo{year}{2009}\natexlab{}.
\newblock \bibinfo{booktitle}{\emph{Learning multiple layers of features from
  tiny images}}.
\newblock \bibinfo{type}{{T}echnical {R}eport}.
  \bibinfo{institution}{Citeseer}.
\newblock


\bibitem[\protect\citeauthoryear{Lee, Pedarsani, and Ramchandran}{Lee
  et~al\mbox{.}}{2015}]%
        {Lee2015}
\bibfield{author}{\bibinfo{person}{Kang~Wook Lee}, \bibinfo{person}{Ramtin
  Pedarsani}, {and} \bibinfo{person}{Kannan Ramchandran}.}
  \bibinfo{year}{2015}\natexlab{}.
\newblock \showarticletitle{SAFFRON: A Fast, Efficient, and Robust Framework
  for Group Testing based on Sparse-Graph Codes}.
\newblock \bibinfo{journal}{\emph{IEEE Transactions on Signal Processing}}
  (\bibinfo{date}{Aug.} \bibinfo{year}{2015}), \bibinfo{pages}{1--10}.
\newblock


\bibitem[\protect\citeauthoryear{Lewis}{Lewis}{2013}]%
        {lewis2013counterfactuals}
\bibfield{author}{\bibinfo{person}{David Lewis}.}
  \bibinfo{year}{2013}\natexlab{}.
\newblock \bibinfo{booktitle}{\emph{Counterfactuals}}.
\newblock


\bibitem[\protect\citeauthoryear{Liblit, Naik, Zheng, Aiken, and Jordan}{Liblit
  et~al\mbox{.}}{2005}]%
        {Liblit2005}
\bibfield{author}{\bibinfo{person}{Ben Liblit}, \bibinfo{person}{Mayur Naik},
  \bibinfo{person}{Alice~X. Zheng}, \bibinfo{person}{Alex Aiken}, {and}
  \bibinfo{person}{Michael~I. Jordan}.} \bibinfo{year}{2005}\natexlab{}.
\newblock \showarticletitle{Scalable Statistical Bug Isolation}. In
  \bibinfo{booktitle}{\emph{In Proceedings of ACM SIGPLAN}}.
  \bibinfo{pages}{15--26}.
\newblock


\bibitem[\protect\citeauthoryear{Louren\c{c}o, Freire, and Shasha}{Louren\c{c}o
  et~al\mbox{.}}{2019}]%
        {Lourenco2019}
\bibfield{author}{\bibinfo{person}{Raoni Louren\c{c}o},
  \bibinfo{person}{Juliana Freire}, {and} \bibinfo{person}{Dennis Shasha}.}
  \bibinfo{year}{2019}\natexlab{}.
\newblock \showarticletitle{Debugging Machine Learning Pipelines}. In
  \bibinfo{booktitle}{\emph{Proceedings of DEEM}}. Article
  \bibinfo{articleno}{3}, \bibinfo{numpages}{10}~pages.
\newblock


\bibitem[\protect\citeauthoryear{Lucic, Kurach, Michalski, Gelly, and
  Bousquet}{Lucic et~al\mbox{.}}{2017}]%
        {lucic2017gans}
\bibfield{author}{\bibinfo{person}{Mario Lucic}, \bibinfo{person}{Karol
  Kurach}, \bibinfo{person}{Marcin Michalski}, \bibinfo{person}{Sylvain Gelly},
  {and} \bibinfo{person}{Olivier Bousquet}.} \bibinfo{year}{2017}\natexlab{}.
\newblock \bibinfo{title}{Are GANs Created Equal? A Large-Scale Study}.
\newblock , \bibinfo{numpages}{21}~pages.
\newblock
\showeprint[arxiv]{1711.10337}


\bibitem[\protect\citeauthoryear{Macula and Popyack}{Macula and
  Popyack}{2004}]%
        {Macula2004}
\bibfield{author}{\bibinfo{person}{Anthony~J. Macula} {and}
  \bibinfo{person}{Leonard~J. Popyack}.} \bibinfo{year}{2004}\natexlab{}.
\newblock \showarticletitle{A Group Testing Method for Finding Patterns in
  Data}.
\newblock \bibinfo{journal}{\emph{Discrete Applied Mathematics}}
  \bibinfo{number}{1-2} (\bibinfo{date}{Nov.} \bibinfo{year}{2004}),
  \bibinfo{pages}{149--157}.
\newblock
\showISSN{0166-218X}


\bibitem[\protect\citeauthoryear{Meliou, Roy, and Suciu}{Meliou
  et~al\mbox{.}}{2014}]%
        {DBLP:journals/pvldb/MeliouRS14}
\bibfield{author}{\bibinfo{person}{Alexandra Meliou}, \bibinfo{person}{Sudeepa
  Roy}, {and} \bibinfo{person}{Dan Suciu}.} \bibinfo{year}{2014}\natexlab{}.
\newblock \showarticletitle{Causality and Explanations in Databases}.
\newblock \bibinfo{journal}{\emph{{PVLDB}}} \bibinfo{number}{13}
  (\bibinfo{year}{2014}), \bibinfo{pages}{1715--1716}.
\newblock


\bibitem[\protect\citeauthoryear{Pearl}{Pearl}{2009}]%
        {Pearl2009}
\bibfield{author}{\bibinfo{person}{Judea Pearl}.}
  \bibinfo{year}{2009}\natexlab{}.
\newblock \bibinfo{booktitle}{\emph{Causality: Models, Reasoning and Inference}
  (\bibinfo{edition}{2nd} ed.)}.
\newblock
\showISBNx{052189560X}


\bibitem[\protect\citeauthoryear{Radford, Metz, and Chintala}{Radford
  et~al\mbox{.}}{2015}]%
        {radford2015unsupervised}
\bibfield{author}{\bibinfo{person}{Alec Radford}, \bibinfo{person}{Luke Metz},
  {and} \bibinfo{person}{Soumith Chintala}.} \bibinfo{year}{2015}\natexlab{}.
\newblock \bibinfo{title}{Unsupervised Representation Learning with Deep
  Convolutional Generative Adversarial Networks}.
\newblock , \bibinfo{numpages}{16}~pages.
\newblock
\showeprint[arxiv]{1511.06434}


\bibitem[\protect\citeauthoryear{Snoek, Larochelle, and Adams}{Snoek
  et~al\mbox{.}}{2012}]%
        {Snoek:2012:PBO:2999325.2999464}
\bibfield{author}{\bibinfo{person}{Jasper Snoek}, \bibinfo{person}{Hugo
  Larochelle}, {and} \bibinfo{person}{Ryan~P. Adams}.}
  \bibinfo{year}{2012}\natexlab{}.
\newblock \showarticletitle{Practical Bayesian Optimization of Machine Learning
  Algorithms}. In \bibinfo{booktitle}{\emph{Proceedings of NIPS}}.
  \bibinfo{pages}{2951--2959}.
\newblock


\bibitem[\protect\citeauthoryear{Snoek, Rippel, Swersky, Kiros, Satish,
  Sundaram, Patwary, Prabhat, and Adams}{Snoek et~al\mbox{.}}{2015}]%
        {Snoek:2015:SBO:3045118.3045349}
\bibfield{author}{\bibinfo{person}{Jasper Snoek}, \bibinfo{person}{Oren
  Rippel}, \bibinfo{person}{Kevin Swersky}, \bibinfo{person}{Ryan Kiros},
  \bibinfo{person}{Nadathur Satish}, \bibinfo{person}{Narayanan Sundaram},
  \bibinfo{person}{Md. Mostofa~Ali Patwary}, \bibinfo{person}{Prabhat Prabhat},
  {and} \bibinfo{person}{Ryan~P. Adams}.} \bibinfo{year}{2015}\natexlab{}.
\newblock \showarticletitle{Scalable Bayesian Optimization Using Deep Neural
  Networks}. In \bibinfo{booktitle}{\emph{Proceedings of the ICML}}.
  \bibinfo{pages}{2171--2180}.
\newblock


\bibitem[\protect\citeauthoryear{TPC}{TPC}{2019}]%
        {TPCC}
\bibfield{author}{\bibinfo{person}{TPC}.} \bibinfo{year}{2019}\natexlab{}.
\newblock \bibinfo{title}{TPC-C benchmark}.
\newblock \bibinfo{howpublished}{\url{http://www.tpc.org/tpcc/}}.
\newblock
\newblock
\shownote{Accessed: 2020-02-10.}


\bibitem[\protect\citeauthoryear{Van~Aken, Pavlo, Gordon, and Zhang}{Van~Aken
  et~al\mbox{.}}{2017}]%
        {VanAken:2017:ADM:3035918.3064029}
\bibfield{author}{\bibinfo{person}{Dana Van~Aken}, \bibinfo{person}{Andrew
  Pavlo}, \bibinfo{person}{Geoffrey~J. Gordon}, {and} \bibinfo{person}{Bohan
  Zhang}.} \bibinfo{year}{2017}\natexlab{}.
\newblock \showarticletitle{Automatic Database Management System Tuning Through
  Large-scale Machine Learning}. In \bibinfo{booktitle}{\emph{Proceedings of
  ACM SIGMOD}}. \bibinfo{pages}{1009--1024}.
\newblock


\bibitem[\protect\citeauthoryear{Wang, Dong, and Meliou}{Wang
  et~al\mbox{.}}{2015}]%
        {Wang:2015:DXD:2723372.2750549}
\bibfield{author}{\bibinfo{person}{Xiaolan Wang}, \bibinfo{person}{Xin~Luna
  Dong}, {and} \bibinfo{person}{Alexandra Meliou}.}
  \bibinfo{year}{2015}\natexlab{}.
\newblock \showarticletitle{Data X-Ray: A Diagnostic Tool for Data Errors}. In
  \bibinfo{booktitle}{\emph{Proceedings of ACM SIGMOD}}.
  \bibinfo{pages}{1231--1245}.
\newblock
\showISBNx{978-1-4503-2758-9}


\bibitem[\protect\citeauthoryear{Wang, Meliou, and Wu}{Wang
  et~al\mbox{.}}{2017}]%
        {Wang:2017:QDE:3035918.3035925}
\bibfield{author}{\bibinfo{person}{Xiaolan Wang}, \bibinfo{person}{Alexandra
  Meliou}, {and} \bibinfo{person}{Eugene Wu}.} \bibinfo{year}{2017}\natexlab{}.
\newblock \showarticletitle{QFix: Diagnosing Errors Through Query Histories}.
  In \bibinfo{booktitle}{\emph{Proceedings of ACM SIGMOD}}.
  \bibinfo{pages}{1369--1384}.
\newblock
\showISBNx{978-1-4503-4197-4}


\bibitem[\protect\citeauthoryear{Yoon, Niu, and Mozafari}{Yoon
  et~al\mbox{.}}{2016}]%
        {Yoon:2016:DPD:2882903.2915218}
\bibfield{author}{\bibinfo{person}{Dong~Young Yoon}, \bibinfo{person}{Ning
  Niu}, {and} \bibinfo{person}{Barzan Mozafari}.}
  \bibinfo{year}{2016}\natexlab{}.
\newblock \showarticletitle{DBSherlock: A Performance Diagnostic Tool for
  Transactional Databases}. In \bibinfo{booktitle}{\emph{Proceedings of {ACM}
  {SIGMOD}}}. \bibinfo{pages}{1599--1614}.
\newblock
\showISBNx{978-1-4503-3531-7}


\bibitem[\protect\citeauthoryear{Zhang, Goodfellow, Metaxas, and Odena}{Zhang
  et~al\mbox{.}}{2018}]%
        {zhang2018selfattention}
\bibfield{author}{\bibinfo{person}{Han Zhang}, \bibinfo{person}{Ian
  Goodfellow}, \bibinfo{person}{Dimitris Metaxas}, {and}
  \bibinfo{person}{Augustus Odena}.} \bibinfo{year}{2018}\natexlab{}.
\newblock \bibinfo{title}{Self-Attention Generative Adversarial Networks}.
\newblock , \bibinfo{numpages}{10}~pages.
\newblock
\showeprint[arxiv]{1805.08318}


\bibitem[\protect\citeauthoryear{{Zhang}, {Xu}, {Li}, {Zhang}, {Wang}, {Huang},
  and {Metaxas}}{{Zhang} et~al\mbox{.}}{2017}]%
        {Zhang_2017}
\bibfield{author}{\bibinfo{person}{H. {Zhang}}, \bibinfo{person}{T. {Xu}},
  \bibinfo{person}{H. {Li}}, \bibinfo{person}{S. {Zhang}}, \bibinfo{person}{X.
  {Wang}}, \bibinfo{person}{X. {Huang}}, {and} \bibinfo{person}{D. {Metaxas}}.}
  \bibinfo{year}{2017}\natexlab{}.
\newblock \showarticletitle{StackGAN: Text to Photo-Realistic Image Synthesis
  with Stacked Generative Adversarial Networks}. In
  \bibinfo{booktitle}{\emph{Proceeding of ICCV}}. \bibinfo{pages}{5908--5916}.
\newblock
\showISSN{2380-7504}


\bibitem[\protect\citeauthoryear{Zheng, Jordan, Liblit, Naik, and Aiken}{Zheng
  et~al\mbox{.}}{2006}]%
        {Zheng2006}
\bibfield{author}{\bibinfo{person}{Alice~X. Zheng}, \bibinfo{person}{Michael~I.
  Jordan}, \bibinfo{person}{Ben Liblit}, \bibinfo{person}{Mayur Naik}, {and}
  \bibinfo{person}{Alex Aiken}.} \bibinfo{year}{2006}\natexlab{}.
\newblock \showarticletitle{Statistical Debugging: Simultaneous Identification
  of Multiple Bugs}. In \bibinfo{booktitle}{\emph{Proceedings of ICML}}.
  \bibinfo{pages}{1105--1112}.
\newblock


\end{thebibliography}
\end{document}